\DeclareSIUnit{\calorie}{cal}
\crefname{figure}{Fig.}{Fig.}
\newtheorem{theorem}{Theorem}
\theoremstyle{definition}
\theoremstyle{theorem}
\theoremstyle{definition}
\def\ket#1{\left\vert #1 \right\rangle}
\def\bra#1{\left\langle #1 \right\vert}
\newcommand{\be}{\begin{equation}}
\newcommand{\ee}{\end{equation}}
\newcommand{\bp}{\begin{pmatrix}}
\newcommand{\ep}{\end{pmatrix}}
\newcommand{\ben}{\begin{enumerate}}
\newcommand{\een}{\end{enumerate}}
\begin{document}

\title{Measurement reduction in variational quantum algorithms}
\author{Andrew Zhao}
\affiliation{Center for Quantum Information and Control, Department of Physics and Astronomy, University of New Mexico, Albuquerque, New Mexico 87106, USA}%

\author{Andrew Tranter}
\affiliation{Department of Physics and Astronomy, Tufts University, Medford, Massachusetts 02155, USA
}%
\author{William M. Kirby}
\affiliation{Department of Physics and Astronomy, Tufts University, Medford, Massachusetts 02155, USA
}%
\author{Shu Fay Ung}
\affiliation{California Institute of Technology, Pasadena, California 91125, USA}
\author{Akimasa Miyake}
\affiliation{Center for Quantum Information and Control, Department of Physics and Astronomy, University of New Mexico, Albuquerque, New Mexico 87106, USA}%
\author{Peter J. Love}
\email{peter.love@tufts.edu}
 \altaffiliation[Also at ]{Brookhaven National Laboratory.}
\affiliation{Department of Physics and Astronomy, Tufts University, Medford, Massachusetts 02155, USA
}%

\begin{abstract}
Variational quantum algorithms are promising applications of noisy intermediate-scale quantum (NISQ) computers. These algorithms consist of a number of separate prepare-and-measure experiments that estimate terms in a Hamiltonian. The number of terms can become overwhelmingly large for problems at the scale of NISQ hardware that may soon be available. We
use unitary partitioning (developed independently by Izmaylov \emph{et al.}~[J.~Chem.~Theory Comput.~\textbf{16}, 190 (2020)]) to define variational quantum eigensolver procedures in which additional unitary operations are appended to the ansatz preparation to reduce the number of terms. This approach may be scaled to use all coherent resources available after ansatz preparation. We also study the use of asymmetric qubitization to implement the additional coherent operations with lower circuit depth.  
Using this technique, we find a constant factor speedup for lattice and random Pauli Hamiltonians. For electronic structure Hamiltonians, we prove that linear term reduction with respect to the number of orbitals, which has been previously observed in numerical studies, is always achievable.  For systems represented on 10--30 qubits, we find that there is a reduction in the number of terms by approximately an order of magnitude. Applied to the plane-wave dual basis representation of fermionic Hamiltonians, however, unitary partitioning offers only a constant factor reduction.  Finally, we show that noncontextual Hamiltonians may be reduced to effective commuting Hamiltonians using unitary partitioning.
\end{abstract}

\pacs{Valid PACS appear here}
\maketitle

\section{\label{intro}Introduction}

Quantum simulation is a promising application of future quantum computers~\cite{Feynman1982,lloyd1996universal,abrams1997simulation,abrams1999quantum}. Applications in materials science, chemistry, and high-energy physics offer the prospect of significant advantages for simulation of quantum systems~\cite{wu2002polynomial,aspuru2005simulated,preskill1}. Calculations on quantum computers that would challenge the classical state of the art require large-scale, error-corrected quantum computers~\cite{babbush2018encoding}. However, quantum hardware is entering the noisy intermediate-scale quantum (NISQ) era~\cite{preskill2018quantum}, in which the machines are still too small to implement error correction but are already too large to simulate classically~\cite{arute2019quantum}. It is natural to ask whether NISQ computers can perform useful tasks in addition to demonstrations of quantum supremacy~\cite{arute2019quantum,boixo2016characterizing,harrow2017quantum}.

The variational quantum eigensolver (VQE) was developed to enable quantum estimation of ground state energies on noisy small-scale quantum computers~\cite{peruzzo2014variational}. VQE was developed as a method for quantum simulation of electronic structure~\cite{peruzzo2014variational} and concurrently as a simulation method for quantum field theory by cavity QED~\cite{barrett2013simulating}. Contemporaneously, the quantum approximate optimization algorithm (QAOA) was developed as a variational approach to approximate solutions of classical optimization problems~\cite{farhi2014quantum}.
VQE has been widely implemented experimentally due to its simplicity and suitability for NISQ devices~\cite{peruzzo2014variational,wang2015quantum,omalley16a,kandala2017hardware,PhysRevX.8.031022,dumitrescu18a}.

VQE consists of preparation of a variational ansatz state by a low-depth parameterized quantum circuit, followed by estimation of the expectation values of the terms in the Hamiltonian, obtained by measuring each separately. This process is repeated until the statistical error on the expectation value of each term is less than some desired precision threshold. Thus, in VQE the long coherent evolutions of phase estimation are replaced by many independent and short coherent evolutions. However, the necessary number of independent measurements may become overwhelmingly large for problem sizes of ${\sim}50$ qubits, which may soon be accessible. Recently, there has been much activity in addressing this measurement problem, via numerous approaches~\cite{babbush2018low,rubin2018application,wangAcceleratedVariationalQuantum2019,verteletskyi2019measurement,jena2019pauli,izmaylov19a,yen2019measuring,huggins2019efficient,gokhale2019MinimizingStatePrep,bonetNearlyOptimalMeas2019,crawford2019efficient,gokhale2019n,torlai2019precise}. In the present paper, we consider the use of extra coherent resources to reduce the number of separate Pauli terms whose expectation values must be estimated. We refer to this process as \emph{term reduction}.  Our methods are closely related to those introduced in~\cite{izmaylov19a,bonetNearlyOptimalMeas2019}, which we discuss later.

We consider throughout a $k$-local Pauli Hamiltonian on $n$ qubits:
\begin{equation}\label{ham}
 H = \sum_{j=1}^m \alpha_j  P_j,   
\end{equation}
where the $m$ terms $ P_j \in \{ I,X,Y,Z \}^{\otimes n} $ are $k$-local Pauli operators, i.e., tensor products of the Pauli matrices and the $2\times2$ identity containing at most $k$ nonidentity tensor factors. This $k$-locality does not refer to any geometrical locality of the layout of the physical qubits. 

The Hamiltonian $H$ for $1\leq k\leq n$ and $1\leq m\leq 4^{n}$ can represent any qubit observable. Interesting cases occur for $k$ a small constant ($2\leq k\leq 4$)~\cite{farhi2014quantum} and for $k$ scaling logarithmically with $n$~\cite{bravyi02a,love12}. Jordan--Wigner mappings of fermions to qubits generate Hamiltonians with $k\leq n$, albeit of a restricted form and in which $m$ is still a polynomial in $n$~\cite{somma02a}. Techniques to map interesting physical Hamiltonians to Pauli Hamiltonians show that the Hamiltonian $H$ is expressive enough to represent problems in physics and chemistry ranging from condensed-matter models to molecular electronic structure to quantum field theory. Restricting to \cref{ham} is therefore not a significant limitation on the applicability of our results to the simulation of quantum systems.

Assuming measurements are to be performed in the $z$ basis on individual qubits, to simulate the terms of \cref{ham} it is necessary to map each $ P_j$ to a measurement in the computational basis (given by the tensor product of the $z$ bases for each qubit). If our NISQ device has all-to-all pairwise connectivity (as is the case for ion trap NISQ devices) then we require $k-1$ CNOT gates and up to $k$ single-qubit Clifford operations to reduce our measurement of a $k$-local Pauli operator $ P_j$ to a $z$-basis measurement~\cite{nielsen2002quantum}. If our NISQ computer has only nearest-neighbor connectivity on the line we may require an additional $O(n)$ CNOT gates to swap the qubits into an adjacent set.

Any completely commuting set of Pauli operators $S_C$ may be mapped to a set of Pauli words over $Z$ and the identity by mapping the common eigenbasis of $S_C$ to the computational basis~\cite{nielsen2002quantum}. Previous works have studied this as a method for reducing the number of measurements;~the resulting technique requires an additional $ O(n^2) $ gates, with numerical evidence for an $ O(n) $ measurement count reduction~\cite{gokhale2019MinimizingStatePrep,yen2019measuring}.
Because the eigenbasis of $S_C$ is a set of stabilizer states (with stabilizers given by elements of $S_C$ up to a sign), this map is a Clifford operation. 
Clifford operators are known to lack transformation contextuality~\cite{love17a}, i.e., they are describable by positive maps on Wigner functions.

Furthermore, Clifford operations map single Pauli operators to single Pauli operators, which means that if we desire to reduce the number of terms in the Pauli Hamiltonian Eq.~\eqref{ham}, our map must possess some non-Clifford structure.
Hence it must in general possess transformation contextuality.

We describe two methods for term reduction based on such transformations. The first technique, unitary partitioning, was previously and independently obtained in~\cite{izmaylov19a,bonetNearlyOptimalMeas2019}. Our second technique provides a more efficient realization of the required transformations at the cost of some ancilla state preparation using asymmetric qubitization---an extension of the linear combination of unitaries model~\cite{LCU2012}---introduced in~\cite{babbush2019SYK}. We present these two methods in \cref{termreduction}. In \cref{examples} we evaluate the method for several classes of Hamiltonians. \cref{sec4} is devoted to analyzing electronic structure Hamiltonians in depth. We confirm and extend the previous numerical results of~\cite{izmaylov19a} observing that a linear term reduction with respect to the number of orbitals is possible. We prove that this linear reduction can always be achieved. We also show in \cref{dual} that unitary partitioning offers a constant factor reduction in the number of terms of a fermionic Hamitonian expressed in the plane wave dual basis defined in~\cite{babbush2018low}. Then, in Section \ref{noncontextual} we show that noncontextual Hamiltonians, defined in~\cite{kirby19a} (also studied in~\cite{raussendorf2019phase}), are reducible to commuting Hamiltonians under unitary partitioning. We close the paper with discussion and directions for future work.

\section{Term reduction for Pauli Hamiltonians}
\label{termreduction}

Given a Hamiltonian of the form \cref{ham}, we wish to reduce the number of distinct expectation values to estimate in a VQE experiment using the coherent operations of the quantum computer.
Suppose that our ansatz $|\psi_A\rangle$ is prepared by a quantum circuit $ U$ from the state $|\psi_0\rangle\equiv\ket{0}^{\otimes n}$ so that
\be
\ket{\psi_A}= U\ket{\psi_0}.
\ee
Then our experiment estimates the expectation values
\be
\langle P_j\rangle =\bra{\psi_0} U^\dagger P_j U\ket{\psi_0}.
\ee

Suppose instead we rewrite our Hamiltonian in terms of a different set of Pauli operators $\{ Q_l\}_{l=1}^{m_c}$ and unitary operations $\{ R_l\}_{l=1}^{m_c}$ as follows:
\be
 H =\sum_{j=1}^m \alpha_j P_j = \sum_{l=1}^{m_c}\gamma_l
 R_l^\dagger Q_l R_l.
\ee
Such decompositions give the correct variational estimate:
\begin{eqnarray}
\bra{\psi_A}  H\ket{\psi_A}&=&\sum_{j=1}^m \alpha_j\bra{\psi_A} P_j\ket{\psi_A}\\
&=& \sum_{l=1}^{m_c}\gamma_l \bra{\psi_A} R_l^\dagger Q_l R_l \ket{\psi_A}.
\end{eqnarray}
Each term labeled by $l$ is estimated by a separate prepare and measure ansatz which appends a different unitary $R_l$ to the ansatz preparation. The unitary rotations $ R_l$ therefore represent the additional coherent resources required to reduce the number of separate expectations to be obtained.

Unlike the approach of~\cite{izmaylov19a}, we do not estimate the unitary operators $ R_l^\dagger Q_l R_l$ themselves. Instead, we propose to perform a set of $m_c$ experiments in which the coherent operations $ R_l$ are appended to $ U$, so that the expectation values are obtained by measuring $ Q_l$ in the resultant state. In this case, the $ R_l$ may be made as simple or as complex as the coherent resources available after the state preparation circuit allow. Term reduction therefore allows the use of VQE for larger systems by optimally using the increasing amount of coherent resources available in new devices.

\subsection{Unitary partitioning}
\label{unitarypartitioning}

We will apply rotations in the adjoint representation of $\mathfrak{su}(2^n)$ with the goal of reducing the number of Pauli terms in the Hamiltonian.
For classical algorithms the number of such terms is not a relevant variable, as one must represent all the nonzero terms of the Hamiltonian in some way. There are some general constraints on the form of terms arising from a Pauli matrix by an adjoint unitary action. We now consider what resources the $ R_l$ operations require and give constructions that achieve term reduction. These ideas were previously presented in~\cite{izmaylov19a}. 

We may write
\be
 R_l^\dagger Q_l R_l =\sum_j \beta_{lj} P_{f(l,j)},
\ee
where $f$ is a relabeling of generalized Pauli matrices.
Any unitary rotation of a generalized Pauli matrix is self-inverse, so $( R_l^\dagger Q_l R_l)^2={\openone}$, which implies
\be
\label{constraints}
\sum_j \beta_{lj}^2 = 1~~{\rm and}~~\sum_{j<k}\beta_{lj}\beta_{lk}\{ P_{f(l,j)}, P_{f(l,k)}\}=0.
\ee
The first constraint can be satisfied for any subset of terms by scaling the coefficients $\beta_{lj}$ by appropriately defining $\gamma_l$. The second constraint is the defining property of subsets of terms which can be combined into a single term by unitary rotation.
For the technique discussed in this section, we divide the terms of the Hamiltonian into sets in which the operators pairwise anticommute; we call such sets \emph{completely anticommuting sets}.
The second constraint in Eq.~\eqref{constraints} is trivially satisfied within each such set.
We then rescale these terms to satisfy the first constraint and seek unitary operators that map each set to a single Pauli operator.

The \emph{compatibility graph} associated to a set of Pauli operators is an undirected graph whose vertices are the operators in the set, and in which a pair of vertices is connected if the associated operators commute. Completely anticommuting sets of Pauli operators are independent sets of the compatibility graph. A partition of the operators into completely anticommuting sets is provided by a coloring of the vertices of the graph  such that no two vertices connected by an edge have the same color. The number of sets is determined by the number of colors. Graph coloring is a well-known NP-complete problem; however, we only require the number of colors to be less than the number of vertices for our method to provide a reduction in the number of terms. A detailed study of the use of various heuristics for graph coloring for the compatibility graphs of Hamiltonians was performed in~\cite{izmaylov19a}.

We now construct the rotation $ R$ that maps a completely anticommuting set to a single Pauli operator by conjugation.
Let $S$ be a set of Pauli operators appearing in the Hamiltonian such that $\{ P_j, P_k\}=0$ $\forall  P_j \neq  P_k\in S$. It will also be useful to define $s=|S|$. The set of terms corresponding to $S$ in the Hamiltonian is then written
\be
 H_S = \sum_{ P_j\in S} \beta_j  P_j.
\ee
We will assume for now that the coefficients satisfy
\be
\sum_j\beta_j^2=1.
\ee
We define the following Hermitian, self-inverse operators:
\be
\mathcal{X}_{sk} = i  P_s P_k, \quad 1\leq k\leq s-1. \label{eqn:xOperator}
\ee
It is straightforward to verify that $ \mathcal{X}_{sk}$ commutes with all $ P_j\in S$ for $j\neq s$, $j\neq k$, and that it anticommutes with  $ P_k$ and $ P_s$. 

We define the adjoint rotation generated by $ \mathcal{X}_{sk}$:
\be
 R_{sk}=\exp\left(-i\frac{\theta_{sk}}{2} \mathcal{X}_{sk}\right), \label{eqn:adjointRotation}
\ee
whose action on the terms in $H_S$ is given by
\be
\begin{split}
 R_{sk}  P_k R_{sk}^\dagger &=\cos\theta_{sk} P_k+\sin\theta_{sk} P_s,\\
 R_{sk}  P_s R_{sk}^\dagger &=-\sin\theta_{sk} P_k+\cos\theta_{sk} P_s.
\end{split}
\ee
That is, $ R_{sk}$ is an adjoint rotation acting in the space spanned by $ P_s$ and $ P_k$.

If we act on $ H_S$ with $ R_{sk}$, we obtain
\be
\begin{split}
 R_{sk}  H_S  R_{sk}^\dagger =&~(\beta_k\cos\theta_{sk} - \beta_s\sin\theta_{sk}) P_k\\
&+(\beta_k\sin\theta_{sk} + \beta_s\cos\theta_{sk}) P_s\\
&+\sum_{ P_j\in S\setminus \{ P_k, P_s\}} \beta_j P_j.
\end{split}
\ee
Choosing $\beta_k\cos\theta_{sk}=\beta_s\sin\theta_{sk}$ therefore gives a rotation of the Hamiltonian with the $P_k$ term removed and with the norm of the term $ P_s$ increased from $\beta_s$ to $\sqrt{\beta_s^2+\beta_k^2}$.
Defining the operator
\be
\label{axis}
 R_S =  R_{s(s-1)}(\theta_{s(s-1)})\cdots R_{s2}(\theta_{s2}) R_{s1}(\theta_{s1}),
\ee
where the angles $\theta_{sk}$ satisfy
\begin{equation}
    \beta_1\cos\theta_{s1}=\beta_s\sin\theta_{s1},
\end{equation}
and, for $k > 1$,
\be
\beta_k\cos\theta_{sk}=\sqrt{\left(\beta_s^2 + \sum_{j=1}^{k-1}\beta_j^2\right)} \sin\theta_{sk}, \label{eqn:theta}
\ee
therefore gives
\be
\begin{split}
 R_{S}  H_S  R_{S}^\dagger &=  P_s,
\end{split}
\ee
where we used the fact that $\sum_{j=1}^s\beta_j^2=1$.  Care must be taken when choosing $\theta_{sk}$ so as to obtain the positive root.

Our decomposition strategy is therefore the following:
\be
\label{hamiltoniandecomp}
 H=\sum_{j=1}^m \alpha_j P_j=\sum_{l=1}^{m_c}\gamma_{l} H_{S_l},
\ee
where
\be
 H_{S_l} = \sum_{ P_j \in S_l} \beta_{lj}  P_j
\ee
has support on a set $S_l$ of self-inverse operators for which $\{ P_j, P_k\}=0$ $\forall j\neq k$ and $\sum_{j}\beta_{lj}^2=1$. Each ${H}_{S_l}$ can be obtained from a single Pauli operator by a unitary rotation as in \cref{axis}, so we can rewrite \cref{hamiltoniandecomp} as
\be
\label{rotatedhamiltonian}
 H = \sum_{l=1}^{m_c}\gamma_l R_{S_l}^\dagger P_{s_l} R_{S_l},
\ee
where the $ R_{S_l}$ operators are given for each set of pairwise anticommuting operators by \cref{axis}. 

For each $ H_{S}$ we must therefore append to our ansatz preparation the set of $s-1$ operators $ R_{sk}$ (recall that $s=|S|$). For an $l$-local Hamiltonian, each of these requires $O(l)$ CNOT and single-qubit rotations to implement. Hence one exchanges $s$ separate Pauli expectation value estimations for a single expectation value estimation, at the cost of $O(sl)$ additional coherent operations. Note that directly appending these transformations to the ansatz preparation results in a factor of 2 reduction in the required coherent resources as compared to~\cite{izmaylov19a}, where both $R$ and $R^\dagger$ must be implemented as controlled operations. 

The decomposition given above and in~\cite{izmaylov19a} is the most direct implementation of the transformation of the Hamiltonian.  Improvement can be made through the use of ancilla qubits and more coherent resources, as we now show in Section~\ref{rotations}.

\subsection{Low-depth implementation of the rotations}
\label{rotations}

In Section~\ref{unitarypartitioning} and in Ref.~\cite{izmaylov19a}, an ordered sequence of rotations is used to write a completely anticommuting set of Pauli operators as a single term. Here we will show how to use a single rotation to perform the same reduction, and show how to implement this rotation using the methods based on linear combinations of unitaries (LCU)~\cite{LCU2012}.

We define a set of operators $ H_k$ for $1\leq k\leq n$ such that $ H_1= P_1$, $ H_n=\sin\phi_{n-1}  H_{n-1}+ \cos\phi_{n-1}  P_n$. Each $ H_n$ is self-inverse, and we consider rotations of $ H_n$ around an axis that is Hilbert--Schmidt orthogonal to both $ H_{n-1}$ and $ P_n$. 
The operator defining this axis is:
\be\label{axis_b}
 \mathcal{X} = \frac{i}{2}\left[ H_{n-1}, P_n\right]. 
\ee
The operator $ \mathcal{X}$ is self-inverse, anticommutes with $ H_n$, and so $[\mathcal{X}, H_n]=2 \mathcal{X}  H_n$. Furthermore, we may show that
\begin{equation}
 \mathcal{X} H_n = i(-\sin\phi_{n-1} P_n+\cos\phi_{n-1} H_{n-1}).
\end{equation}
The operator $ \mathcal{X}$ generates the rotation 
\begin{equation}\label{req}
 R=\exp(-i\alpha  \mathcal{X}/2)=\cos(\alpha/2){\openone}-i\sin(\alpha/2)  \mathcal{X}.
\end{equation} 
The adjoint action of $ R$ on $ H_n$ is given by
\be
 R  H_n  R^\dagger = \sin(\phi_{n-1}-\alpha) H_{n-1}+\cos(\phi_{n-1}-\alpha) P_n.
\ee
Choosing $\alpha=\phi_{n-1}$ therefore gives $ R  H_n  R^\dagger =  P_n$. This is a simple constructive demonstration that any self-inverse operator supported on a set of pairwise anticommuting operators $S$ can be mapped to a single Pauli operator. (The details of these calculations can be found in Appendix~\ref{app:comp_X}.)

The terms in the operator $ \mathcal{X}$ all pairwise anticommute, and $ \mathcal{X}$ squares to the identity. This yields the expression for $ R$ given in \cref{req}. As a linear combination of Pauli operators, which are unitary, this naturally suggests implementation of $ R$ using the LCU method~\cite{LCU2012}. These methods can be combined with qubitization and quantum signal processing to reduce the required gate count~\cite{low16a,low17,poulin18a,babbush2018encoding}. However, $ \mathcal{X}$ has coefficients that are $\ell_2$-normalized, whereas the standard LCU methods naturally treat Hamiltonians with $\ell_1$-normalized coefficients. Fortunately, this issue was already addressed in Ref.~\cite{babbush2019SYK}, in which an asymmetric LCU (ALCU) method was introduced. We propose the ALCU method for the implementation of $ R$. Because $ R$ is equivalent to evolution under the Hamiltonian $ \mathcal{X}$, the cost of asymmetric qubitization scales as the square root of the number of terms in $ \mathcal{X}$, and hence the use of this method offers a quadratic speedup in asymptotic scaling compared to the methods of Section~\ref{unitarypartitioning} and Ref.~\cite{izmaylov19a}. 

ALCU requires $O(\log s)$ additional qubits ($s$ being the maximum size of any of the anticommuting sets) and more complex gate operations than the method of Section~\ref{unitarypartitioning} and~\cite{izmaylov19a}. However, the use of these methods in the context of VQE provides a motivation to implement more sophisticated quantum algorithms on NISQ devices. It should be noted that implementation of ALCU for this purpose is much simpler than its use for direct simulation of time evolution under the original Hamiltonian. This is because the number of terms in $ \mathcal{X}$ is only equal to the number of terms in an anticommuting set. As we discuss in detail below, this can be made smaller in order to take advantage of any additional coherent resources available after state preparation.

\subsection{Commuting terms}

Requiring that the sets of terms to be combined anticommute, as in Sections~\ref{unitarypartitioning} and \ref{rotations}, is sufficient but not necessary to perform term reduction. If there is additional structure on the coefficients of the Hamiltonian, the second constraint in \cref{constraints} may be satisfied without the individual terms all vanishing. Here we consider the possibility that for some $l$,
\be
\sum_{j<k}\beta_{lj}\beta_{lk}\{ P_j, P_k\}=0,
\ee
while the individual terms are nonzero (note that we have simplified the labeling of the Pauli terms). Because generalized Pauli matrices have the property that they either commute or anticommute, we can restrict attention to the subset of the operators that commute. We then require that
\be
\sum_{j<k}\beta_{lj}\beta_{lk}\{ P_j, P_k\}=2\sum_{S(l,j,k)} \beta_{lj}\beta_{lk}  P_j P_k=0,
\ee
where $S(l,j,k)$ is the set of indices satisfying $j<k$ and $[ P_j, P_k]=0$. Each term here is nonzero, so the condition must be enforced by cancellation of pairs, i.e., due to relations of the form
\be
\beta_{lj}\beta_{lk}  P_j P_k + \beta_{ls}\beta_{lr}  P_s P_r= 0.
\ee
This can only be true if $|\beta_{lj}\beta_{lk}|=|\beta_{ls}\beta_{lr}|$, and so this possibility of term reduction depends on the details of the coefficients more sensitively than simply requiring all terms to anticommute in a particular subset. 

Supposing that the conditions on pairs of coefficients are satisfied, we also require that
\be
 P_j P_k \pm  P_s P_r =0
\ee
(for $\beta_{lj}\beta_{lk}=\pm \beta_{ls}\beta_{lr}$).
Suppose the pairs $(j,k)$ and $(s,r)$ have one operator in common, $j=s$. Then our requirement is $ P_k=\pm  P_r$, meaning that $(j,k)$ and $(s,r)$ are the same pair. Hence the pairs $(j,k)$ and $(s,r)$ must be completely distinct. This implies that $ P_j P_k= P_t$ and $\pm  P_s P_r= P_t$. This is perfectly possible:~for example, if $ P_k=IX$, $ P_j=XI$, $ P_r=ZZ$, and $ P_s=YY$, then $ P_k P_j=XX$ and $ P_r P_s=-XX$. We leave further investigation of this possibility for term reduction to future work.

\subsection{Total measurement cost estimates}

Achieving precision $ \epsilon $ in the estimate of the expectation value $ \langle {H} \rangle $ requires a statistically significant sample of qubit measurements for each Pauli term in $ {H} $. Naively, this requires approximately $ |\alpha_j|^2/\epsilon^2 $ measurements for the $ j $th term, where $ \alpha_j $ is its associated weight. However, it was proposed in~\cite{PhysRevA.92.042303}, and formally proven in~\cite{rubin2018application}, that the optimal number of measurements per term is
\be
\label{singletermmeasurements}
M_j = \frac{|\alpha_j|\sigma_j}{\epsilon^2} \left( \sum_{k=1}^m |\alpha_k| \sigma_k \right),
\ee
where $ \sigma_j^2 = \langle {P}_j^2 \rangle - \langle {P}_j \rangle^2 $ is the operator variance of the $ j $th term. Using $ \sigma_j^2 \leq 1 $ for all self-inverse operators, the upper bound for the total number of measurements to estimate the full Hamiltonian is~\cite{rubin2018application}
\be
M = \sum_{j=1}^m M_j = \left( \frac{1}{\epsilon} \sum_{j=1}^m |\alpha_j|\sigma_j \right)^2 \leq \frac{\Lambda^2}{\epsilon^2},
\ee
where $ \Lambda = \sum_{j=1}^m |\alpha_j| $ is the $ \ell_1 $-norm of the Hamiltonian weights.

Using the standard inequalities
\be
\label{lpinequality}
\frac{1}{\sqrt{d}}\|{x}\|_1 \leq \|{x}\|_2 \leq \|{x}\|_1
\ee
for any $ x \in \mathbb{R}^d $, where $ \| \cdot \|_p $ denotes the $ \ell_p $-norm, we may establish bounds for the value of $ \Lambda^2 $ after transforming the Hamiltonian via unitary partitioning. We reuse the notation of \cref{hamiltoniandecomp,rotatedhamiltonian}, so that
\be
{H} = \sum_{j=1}^m \alpha_j P_j
\ee
is the Hamiltonian as given, and
\be
{H} = \sum_{l=1}^{m_c}\gamma_l R_{S_l}^\dagger P_l R_{S_l}
\ee
is its form after unitary partitioning. Note that $ R_{S_l}^\dagger P_l R_{S_l} $ is self-inverse, so the variances remain bounded by 1. Since the coefficients associated with each anticommuting set $ S_l $ must be $ \ell_2 $-normalized, we have
\be
\gamma_l^2 = \sum_{k\in S_l} \alpha_k^2.
\ee
By abuse of notation, here we use $ S_l $ to denote the index set on which its elements are supported.

Let $ \Lambda $ be the $ \ell_1 $-norm of the weights $ \{ \alpha_j \}_{j=1}^m $ as before, and $ \Lambda_c $ be the $ \ell_1 $-norm of $ \{ \gamma_l \}_{l=1}^{m_c} $. Then, using the right-hand inequality of \cref{lpinequality}, we obtain
\be
\label{Lambdaupper}
\begin{split}
\Lambda_c = \sum_{l=1}^{m_c} |\gamma_l| &= \sum_{l=1}^{m_c} \sqrt{ \sum_{k\in S_l} \alpha_k^2 } \\
&\leq \sum_{l=1}^{m_c} \sum_{k\in S_l}  |\alpha_k| \\
&= \sum_{j=1}^{m} |\alpha_j| = \Lambda.
\end{split}
\ee
Thus $ \Lambda_c \leq \Lambda $, and in fact this bound is saturated only if no partitioning is performed at all.

Applying the left-hand inequality of \cref{lpinequality} to the first line of \cref{Lambdaupper} yields
\be
\label{Lambdalower1}
\sum_{l=1}^{m_c} \Bigg( \frac{1}{\sqrt{|S_l|}} \sum_{k\in S_l}  |\alpha_k| \Bigg) \leq \Lambda_c.
\ee
Let $ s_{\mathrm{max}} = \max_{l} |S_l| $ be the size of the largest set in the partition. Then
\be
\label{Lambdalower2}
\frac{1}{\sqrt{s_{\mathrm{max}}}}  \sum_{l=1}^{m_c} \sum_{k\in S_l}  |\alpha_k| = \frac{\Lambda}{\sqrt{s_{\mathrm{max}}}} \leq \Lambda_c.
\ee
Bounding the set sizes by $ s_{\mathrm{max}} $ is fairly tight if they are all roughly equal, which is both desirable (since the gate complexity scales with the set size) and always possible (one may take a large set and simply divide it into smaller ones, which remain fully anticommuting). Roughly speaking, the number of measurements $ M_c $ may be thought of as being lower bounded by $ M/s_{\mathrm{max}} $, although this is not the whole story, since $ \Lambda $ (resp.~$ \Lambda_c $) is itself an upper bound estimate for $ M $ (resp.~$ M_c $). Equation~\eqref{Lambdalower2} gives only an approximate sense for the maximum amount of measurement reduction possible by unitary partitioning when taking into account the statistical repetitions.

It is worth noting that this lower bound is saturated when $ |\alpha_j| = |\alpha_k| $ $ \forall j,k $. In fact, a weaker condition saturates the tighter bound of \cref{Lambdalower1}. There we require only that $ |\alpha_j| = |\alpha_k| $ $ \forall j,k \in S_l $ for each $ l $---that is, the coefficient magnitudes are uniform within each set. Supposing that this approximately holds, and again that all $ |S_l| $ are roughly the same, yields $ \Lambda_c \approx \Lambda/\sqrt{s_{\mathrm{max}}} $.

Thus partitioning with additional constraints respecting these coefficient conditions may result in more measurement reduction, without requiring any additional coherent rotations. The partitioning algorithm would then require significantly more classical computational resources, as this is now a \emph{weighted} graph coloring problem, but in principle these ideas may be implemented straightforwardly. For the analysis in the following section, we focus only on the number of unique Hamiltonian terms before and after partitioning as a rough estimate for the amount of measurement reduction achieved by our method.

\section{Preliminary Applications}
\label{examples}

\subsection{Transverse-field Ising model in one dimension}

To give a simple realization of these ideas we consider the transverse-field Ising model (TIM) on a one-dimensional lattice with $L$ sites and periodic boundary conditions:
\be
H = \sum_{j=1}^L (Z_{j+1}Z_j + x X_j).
\ee
No pair of $Z$ terms and no pair of $X$ terms can be in the same anticommuting set, so we choose pairs of anticommuting operators composed of $Z_{j+1}Z_j$ and $X_{j+1}$.
We then write:
\begin{equation}
Z_{j+1}Z_j + x X_{j+1}I_j = \sqrt{1+x^2}\left(\frac{Z_{j+1}Z_j + xX_{j+1}I_j}{\sqrt{1+x^2}} \right) .
\end{equation}
From \cref{eqn:xOperator,eqn:adjointRotation} we define the operator
\begin{align}
    R_j &= \exp\left(\frac{i\theta}{2}Y_{j+1}Z_j\right) \\
    &= {\rm CNOT}_{(j+1,j)} \times \exp\left(\frac{i\theta}{2}Y_{j+1}I_j\right) ,
\end{align}
where $\theta$ is given from \cref{eqn:theta}:
\begin{equation}
    \frac{x}{\sqrt{1+x^2}} \cos{\theta} = \frac{1}{\sqrt{1+x^2}} \sin{\theta}.
\end{equation}
Our final Hamiltonian decomposition is then:
\be
 H = \sum_{j=1}^{L}\left[R_{j}^\dagger Z_{j}Z_{j+1}R_{j}\right].
\ee
Whereas our initial Hamiltonian had $2L$ terms, our final Hamiltonian has $L$ terms. 

\subsection{TIM on arbitrary graphs}

If we consider transverse Ising Hamiltonians defined on arbitrary graphs, the analysis does not change substantially. The maximum size of a totally anticommuting set is still 2, independent of the graph, because once a single local $ X_i$ is included in the set, one can include only one $ Z_i Z_j$ term in the set. Hence, the number of terms in a transverse Ising Hamiltonian on a general graph with vertex set $V$ and edges $E$ can be reduced from $|E|+|V|$ to $|E|$. This cannot change the asymptotic scaling of the number of terms as a function of the number of vertices. In particular, for regular graphs with degree $q$ the number of edges is $|V|q/2$ and the number of terms in the transverse Ising model Hamiltonian is $|V|(1+q/2)$, which can be reduced to $|V|q/2$, a constant factor improvement of $q/(q+2)$. Note that this case includes lattice models. The relative lack of performance here is due to the presence of little anticommutative structure in the operators of the transverse-field Ising model.

\subsection{Compatibility graphs of random Hamiltonians}

Randomly choosing Pauli terms from the complete set of $n$-qubit Pauli observables corresponds to selecting a subset of the vertices of the full compatibility graph of all Pauli observables. The resulting compatibility graphs can only be subgraphs of this graph, which has a finite geometric structure considered in~\cite{planat2008pauli}. Therefore randomly sampling Pauli terms, resulting in an edge in the compatibility graph with given probability, say $p$, does not result in Erd\H{o}s--R\'{e}nyi random graphs given by populating edges with probability $p$. The constraint that the graphs arising be subgraphs of the full compatibility graph of all Pauli operators causes this deviation.

However, for large numbers of qubits, fixed locality of operators, and a number of Pauli terms scaling polynomially with the number of qubits, the probability that a randomly sampled pair of Pauli operators commutes should approach $1$ with increasing $n$. In this limit the compatibility graph will be closely approximated by a polynomially sized complete subgraph of the exponentially large compatibility graph of all Pauli operators on $n$ qubits, with a few edges missing. Asymptotically, we expect that the number of colors required will tend to the chromatic number of the complete graph, which is equal to the number of vertices.

As we we shall see in Sections~\ref{r2} and~\ref{rk}, for any fixed random $k$-local Hamiltonian we may write the probability that a randomly sampled pair of terms commute as
\be\label{lim}
p_c\simeq 1-\frac{{\rm const}}{n}.
\ee
The chromatic number of almost all such graphs will be proportional to $n$, and hence we expect at most a constant factor reduction in the number of terms~\cite{bollobas1988chromatic}. The problem of finding commuting cliques of related graphs was discussed in~\cite{jena2019pauli}. Here we study the problem from the context of finding anticommuting sets for unitary partitioning.

\subsection{Random $2$-local Pauli Hamiltonians}
\label{r2}

Consider a $2$-local Pauli Hamiltonian defined on an Erd\H{o}s--R\'{e}nyi random interaction graph with $n$ vertices and $|E|$ edges. A term in the Hamiltonian corresponds to an edge in the set $E$ and a sample drawn uniformly at random from $\{X,Y,Z\}^{\otimes 2}$. We choose Hamiltonians with only one term per edge. Two terms corresponding to edges $e_1$ and $e_2$ from such a Hamiltonian anticommute if
\ben
\item{$e_1\neq e_2$,}
\item{$|e_1\bigcap e_2|=1$.}
\een

What is the probability that $|e_1\bigcap e_2|=1$? There are $n-2$ vertices connected to each vertex of $e_1$ that form edges with $|e_1\bigcap e_2|=1$. There are therefore $2(n-2)$ of the $n(n-1)/2$ possible edges that give $|e_1\bigcap e_2|=1$ for any given $e_1$. The probability of such an incidence is therefore $p_e = 4(n-2)/[n(n-1)]$. 

What is the probability that two terms intersect on one qubit and do not commute? There are nine operators that can be associated with an edge. Examination of this set gives a probability of $2/3$ that tensor factors incident on the same vertex disagree. 

Given a pair of edges from the interaction graph, i.e., a pair of terms in the Hamiltonian, the probability that the associated operators anticommute is therefore
\be\label{2l}
p_a =\frac{8}{3n}\frac{n-2}{n-1}.
\ee

We now analyze the coloring of an Erd\H{o}s--R\'{e}nyi random graph in which edges are populated independently with probability $p$~\cite{erds1960evolution}. As noted above, the compatibility graphs of random Pauli Hamiltonians cannot be Erd\H{o}s--R\'{e}nyi but in the limit of large numbers of qubits we expect these results to be asymptotically correct. Our procedure for defining a random 2-local Pauli Hamiltonian has given us a probability $1-p_a$ that an edge is present in the compatibility graph, because Pauli operators either commute or anticommute. 

Almost every random graph with $m$ vertices drawn from an ensemble where the probability of an edge between any pair of vertices is $1-p_a$ has chromatic number~\cite{bollobas1988chromatic}
\be
\chi=\left(\frac{1}{2}+o(1)\right)\log \frac{1}{p_a}\frac{m}{\log m}.
\ee
This immediately enables us to characterize the performance of our method on random $2$-local Hamiltonians. Suppose the number of terms rises as a power $\tau$ of the number of qubits $m=O(n^\tau)$.
Then the fractional improvement $m_c/m$ in the number of terms in the Hamiltonian will be
\be
\frac{m_c}{m}=\left(\frac{1}{2}+o(1)\right)\log \frac{3n(n-1)}{8(n-2)}\frac{1}{\tau\log n}.
\ee
This implies that we should expect a reduction in the number of the terms in the Hamiltonian by a constant factor of about $2\tau$.

\subsection{Random $k$-local Hamiltonians}
\label{rk}

To choose a random interaction hypergraph of a $k$-local Hamiltonian we choose $m$ independent $k$-tuples of qubit labels between $1$ and $n$. We then uniformly randomly assign one of the $3^k$ Pauli operators of weight $k$ to that $k$-tuple. Let $S_1$ and $S_2$ be two sets of $k$ qubits. Given tuple $S_1$ there are
\begin{equation}
N_I={n-k\choose k-I}{k\choose I}    
\end{equation}
tuples $S_2$ with $I\leq|S_1\cup S_2|$, where $0\leq I\leq k$. Summing over $I$ recovers all $k$-tuples, by the Chu--Vandermonde identity. The probability of tuples $S_1$ and $S_2$ intersecting on $I$ qubits is therefore
\begin{equation}
p_I =  {n\choose k}^{-1}{n-k\choose k-I}{k\choose I}. 
\end{equation}

Given that the tuples $S_1$ and $S_2$ intersect on $I$ qubits, what is the probability that they commute? Let the Pauli factors of $S_1$ and $S_2$ be identical on a subset of their intersection of size $\sigma$ and otherwise every pair of tensor factors in the intersection disagree. The total number of pairs of Pauli operators on the intersection is $9^I$. The number of Pauli operators identical on $\sigma$ qubits is
\begin{equation}
t_{I,\sigma}=3^\sigma{I\choose \sigma}3^{I-\sigma}2^{I-\sigma},   
\end{equation}
which is obtained by multiplying the $3^\sigma$ Pauli operators common to the subset of $\sigma$ qubits by the number of subsets of size $\sigma$ and the number of distinct assignments to pairs of tensor factors in the complement of the subset of size $\sigma$. The total number of Pauli operators is then given by
\begin{equation}
9^I = \sum_{\sigma=0}^I\frac{6^I}{2^\sigma}{I\choose \sigma}.
\end{equation}
In order that a pair of operators commutes the size of the complement of the identical set must be even. That is,
\begin{equation}
p_c^{(I)} = \left(\frac{2}{3}\right)^I \sum_{I-\sigma~{\rm even}}\frac{1}{2^\sigma}{I\choose \sigma} =\frac{1}{2}\left(1+\frac{1}{3^I}\right).  
\end{equation}

The overall probability that a pair of tuples commutes is therefore
\begin{equation}\label{pcgen}
\begin{split}
p_c &=\sum_{I} p_Ip_c^{(I)}=
\sum_{I}\frac{p_I}{2}\left(1+\frac{1}{3^{I}}\right).
\end{split}
\end{equation}
For $k=2$ we recover \cref{2l}. For $k=3$ we obtain
\be
p_c=1-\frac{1}{n(n-1)(n-2)}\left(3n^2-13n-\frac{134}{3}\right).
\ee
Higher values of $k$ can be obtained from \cref{pcgen}. The expression in \cref{pcgen} justifies the use of coloring bounds for Erd\H{o}s--R\'{e}nyi random graphs for large numbers of qubits when the expression of \cref{pcgen} limits to \cref{lim}.

\section{Electronic Structure Hamiltonians}
\label{sec4}

Quantum chemistry simulations are expected to be an important use of variational quantum algorithms~\cite{olsonQuantumInformationComputation2017}. The goal is to find the eigenvalues and eigenvectors of the molecular electronic Hamiltonian
\begin{equation}\label{eq:H_e}
	H = \sum_{p,q} h_{pq} a_p^\dagger a_q + \frac{1}{2}\sum_{p,q,r,s} h_{pqrs} a_p^\dagger a_q^\dagger a_r a_s,
\end{equation}
where $ a_p^\dagger $ and $ a_p $ are fermionic creation and annihilation operators acting on the space spanned by molecular spin orbitals $ \chi_p $. For computational purposes, this basis set is truncated to the first $ N $ orbitals. The fermionic operators satisfy the canonical anticommutation relations
\begin{equation}\label{eq:fermion_CAR}
	\begin{split}
		\{ a_p^\dagger,a_q^\dagger \} &= \{ a_p,a_q \} = 0,\\
		\{ a_p,a_q^\dagger \} &= \delta_{pq}{\openone}.
	\end{split}
\end{equation}
The weights $h_{pq}$ and $h_{pqrs}$ are defined as
\begin{align}
	h_{pq} &= \delta_{\sigma_p\sigma_q} \int d^3r\, \chi_p^*(\mathbf{r}) \left( -\frac{\nabla^2}{2} - \sum_I \frac{\zeta_I}{|\mathbf{r}-\mathbf{R}_I|} \right) \chi_q(\mathbf{r}),\\
	h_{pqrs} &= \delta_{\sigma_p\sigma_s} \delta_{\sigma_q\sigma_r} \int d^3r_1 d^3r_2\, \frac{\chi_p^*(\mathbf{r}_1)\chi_q^*(\mathbf{r}_2)\chi_r(\mathbf{r}_2)\chi_s(\mathbf{r}_1)}{|\mathbf{r}_1-\mathbf{r}_2|},
\end{align}
where $ \mathbf{r} $ denotes the electronic spatial coordinates, $ \sigma_p \in \{ \uparrow,\downarrow \} $ is the spin value of the $ p $th orbital, and $ \{ \mathbf{R}_I \}_I $ and $ \{ \zeta_I \}_I $ are the molecule's classical nuclear positions and their associated charges, respectively. These spatial integrals can be efficiently pre-computed on a classical computer. For use in a quantum algorithm, the Hamiltonian is then transformed to a weighted sum of Pauli strings using a fermion-to-qubit encoding, such as the Jordan--Wigner~\cite{jordan28a}, Bravyi--Kitaev~\cite{bravyi02a,love12,tranter15a}, or other similar~\cite{setia17a} mappings. For the former two encodings, the number $n$ of qubits is the same as the number $N$ of molecular spin orbitals.  The expectation value of each Pauli string is measured independently.  The power of this approach stems from the ability to prepare ansatz states that cannot be efficiently constructed on a classical computer;~these are typically derived from a unitary coupled cluster ansatz~\cite{mccleanTheoryVariationalHybrid2016,romeroStrategiesQuantumComputing2018,leeGeneralizedUnitaryCoupled2019}. This allows for efficient computation of high-precision eigenvalues, which has importance when considering calculations that require such precision, such as reaction kinetics and dynamics.

Implementation of this procedure for chemical systems at the desired accuracy is challenging. For chemistry, the required precision is typically considered to be a constant $\SI{1}{\kilo\calorie\per\mol}$, or $1.6$~mHa. This level of precision is roughly commensurate with that obtained by experimental techniques in thermochemistry.  Recall from \cref{singletermmeasurements} that the number of independent measurements that must be performed to estimate the expectation value of a single term with weight $ h $ to precision $ \epsilon $ is $O(\Lambda|h|/\epsilon^2)$.  For chemical accuracy, this means that each term requires on the order of hundreds of thousands of independent measurements, each of which requires a separate ansatz preparation stage.  This must be repeated for each step of the variational optimisation, for each of the $O(N^4)$ terms in the molecular Hamiltonian (noting that using the Jordan--Wigner transformation requires up to 16 Pauli strings for each term).  As such, this quantum chemistry problem has recently garnered much interest with regard to reducing VQE measurement costs~\cite{wangAcceleratedVariationalQuantum2019,huggins2019efficient,babbush2018low,bonetNearlyOptimalMeas2019,gokhale2019MinimizingStatePrep,yen2019measuring,izmaylov19a}. The term reduction strategy discussed in \cref{termreduction} appears a promising way to reduce the overall resources required by utilising available coherent computational resources subsequent to ansatz preparation.

In the absence of restrictions on the length of circuits that can be performed coherently, the term reduction strategy reduces the number of expectation values that must be independently estimated, going from the number of Hamiltonian terms to the number of fully anticommuting sets of terms.  The main task is therefore to partition the Hamiltonian into such sets.  The effectiveness of this term reduction strategy can be quantified by examining the number of fully anticommuting sets for a given Hamiltonian with respect to both the number of orbitals and the total number terms in the unmodified Hamiltonian. In Section~\ref{subsec:linearReductionInTerms}, we show that it is always possible to reduce the number of terms from $ O(N^4) $ to at most $ O(N^3) $ for any electronic structure Hamiltonian. In Section~\ref{subsec:pauliLevelColouring}, we perform numerical studies using specific molecules and compare the results to our analytic construction. We also consider how the constraint of circuit size affects one's ability to construct such partitions. Finally, in \cref{dual} we examine such Hamiltonians in the plane-wave dual basis introduced in~\cite{babbush2018low} and observe a constant factor reduction of terms by unitary partitioning.
	
\subsection{Majorana operators}
The approach we take here will be agnostic to the choice of qubit encoding. However, in order to partition the terms into completely anticommuting sets, it will be convenient to express them using Majorana operators. This is because they place all the fermionic operators on an equal footing, are Hermitian and unitary, and obey a single anticommutation relation. Here, we briefly review the properties of these operators essential for our analysis. The single-mode Majorana operators are defined from the fermionic modes as
\begin{equation}
\begin{split}
	\gamma_{2p} &= a_p + a_p^\dagger,\\
	\gamma_{2p+1} &= -i(a_p - a_p^\dagger).
\end{split}
\end{equation}
In this formalism, the anticommutation relations of Eq.~\eqref{eq:fermion_CAR} become
\begin{equation}\label{eq:majorana_CAR}
	\{ \gamma_j, \gamma_k \} = 2\delta_{jk}{\openone}.
\end{equation}
These $ 2N $ single-mode operators generate a basis (up to phase factors) for the full algebra of Majorana operators via arbitrary products, i.e.,
\begin{equation}
	\gamma_A = \prod_{j \in A} \gamma_j,
\end{equation}
where $ A \subseteq \{ 0,\ldots,2N-1 \} $ is the support of $ \gamma_A $. From Eq.~\eqref{eq:majorana_CAR}, it is straightforward to show that the anticommutator between two arbitrary Majorana operators $ \gamma_A $ and $ \gamma_B $ is determined by their individual supports and their overlap:
\begin{equation}\label{eq:majorana_gen_AR}
	\{ \gamma_A, \gamma_B \} = \left[ 1 + (-1)^{|A||B| + |A \cap B|} \right] \gamma_A \gamma_B.
\end{equation}
This relation provides a clear picture of how to construct fully anticommuting sets of fermionic operators. Since the electronic Hamiltonian contains only terms of quadratic and quartic order, we restrict our attention to even-parity products. In this setting, we only need to examine the overlap of the Majorana operators' supports:~if $ |A \cap B| $ is odd (i.e., the two operators share an odd number of single-mode indices), then they anticommute.

\subsection{Linear reduction in terms}
\label{subsec:linearReductionInTerms}
Since there are no spin interaction terms in our Hamiltonian, we can always choose molecular orbital basis functions $ \chi_p $ which are real-valued. With this, it follows that $ h_{pq}, h_{pqrs} \in \mathbb{R} $, and in particular, we have the permutational symmetries
\begin{align}
	h_{pq} &= h_{qp}, \label{eq:hpq_sym}\\
	h_{pqrs} = h_{sqrp} &= h_{prqs} = h_{srqp}. \label{eq:hpqrs_sym_1}
\end{align}
Furthermore, the canonical anticommutation relations give $ a_p^\dagger a_q^\dagger a_r a_s = a_q^\dagger a_p^\dagger a_s a_r $, which implies that
\begin{equation}\label{eq:hpqrs_sym_2}
	h_{pqrs} = h_{qpsr},
\end{equation}
for a total of eight permutational symmetries in the two-body integrals. Using these symmetries and the generalized anticommutation relation, Eq.~\eqref{eq:majorana_gen_AR}, one can rewrite the Hamiltonian using Majorana operators as
\begin{equation}\label{eq:H_e_majorana}
\begin{split}
	H &= \tilde{h} {\openone} + \sum_{p,q} \tilde{h}_{pq} i\gamma_{2p}\gamma_{2q+1} \\
	&\quad + \frac{1}{2} \sum_{\substack{p,q,r,s\\p\neq q,r\neq s}} \tilde{h}_{pqrs} \gamma_{2p}\gamma_{2q}\gamma_{2r+1}\gamma_{2s+1}.
\end{split}
\end{equation}
We refer the reader to Appendix~\ref{sec:H_e_majorana_appendix} for the details of this derivation. The redefined weights $ \tilde{h} $, $ \tilde{h}_{pq} $, and $ \tilde{h}_{pqrs} $ are given in Eq.~\eqref{eq:new_coeff}. For our present analysis, the only relevant detail here is that each term features an equal number of even and odd indices in its support. In principle, any such combination of terms may appear in the Hamiltonian. In this form, it becomes clear that there are up to $ N^2 $ quadratic terms and $ \binom{N}{2}^2 $ quartic terms.

Furthermore, since the single-mode Majorana operators are Hermitian, there is a one-to-one correspondence between Majorana operators and the respective Pauli strings obtained after a fermion-to-qubit transformation (for encodings that preserve the number of orbitals as the number of qubits). For instance, in the Jordan--Wigner encoding, we have
\begin{equation}
\begin{split}
    \gamma_{2p} = X_p Z_{p-1} \cdots Z_0, \\
    \gamma_{2p+1} = Y_p Z_{p-1} \cdots Z_0. \\
\end{split}
\end{equation}
Since the single-mode Majorana operators simply become Pauli strings, arbitrary products of them remain single Pauli strings. In contrast, if one were to deal with the fermionic operators directly, a single $ a_p^\dagger a_q^\dagger a_r a_s $ term would generate a linear combination of up to 16 unique Pauli strings. By writing the Hamiltonian in terms of Majorana operators, we have not circumvented this overhead, but rather, we have explicitly incorporated it into our term counting, while remaining encoding agnostic. In particular, many cancellations and simplifications may occur between the transformed terms, yielding the expression given above in Eq.~\eqref{eq:H_e_majorana}. Also note that any anticommuting partition in the Majorana formalism remains valid after a qubit transformation, since the anticommutation relations are preserved.

Recall from Eq.~\eqref{eq:majorana_gen_AR} that we had determined that every pair of terms anticommutes if and only if their supports intersect an odd number of times. This fact, along with the specific form of the terms appearing in Eq.~\eqref{eq:H_e_majorana}, is crucial for showing that it is always possible to partition this Hamiltonian into at most $ O(N^3) $ completely anticommuting sets.

We note that very recent results have made similar findings. In~\cite{bonetNearlyOptimalMeas2019}, it was observed that at least $ \Omega(N^3) $ sets would be necessary to divide the set of \emph{all} quartic Majorana operators, rather than the specific terms appearing in electronic structure Hamiltonians. Meanwhile, in~\cite{gokhale2019n}, an algorithm was presented which partitions electronic structure terms into $ O(N^3) $ completely \emph{commuting} sets. The analysis presented there specifies the Jordan--Wigner encoding, but does not assume any of the permutational symmetries in the $ h_{pq}, h_{pqrs} $ coefficients.

We now prove our claim by providing an explicit construction of such a partition.

\begin{theorem}\label{thm:cubicPartitionTheorem}
Let
\begin{equation}
	\mathcal{M} = \{ \gamma_{2p}\gamma_{2q}\gamma_{2r+1}\gamma_{2s+1} \mid p < q \;{\rm and }\; r < s \}
\end{equation}
be the set of all possible quartic Majorana operators appearing in the electronic structure Hamiltonian. For each triple $ (q,r,s) \in \{ 0,\ldots,N-1 \}^3 $ satisfying $ r < s $, define
\begin{equation}
	S_{(q,r,s)} = \{ \gamma_{2p}\gamma_{2q}\gamma_{2r+1}\gamma_{2s+1} \mid p < q \}.
\end{equation}
These sets $ S_{(q,r,s)} $ are completely anticommuting, and they form a partition of $ \mathcal{M} $. Furthermore, there are $ O(N^3) $ such sets.
\end{theorem}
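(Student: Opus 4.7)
The plan is to verify the three assertions in sequence, relying entirely on the generalized anticommutation relation~\cref{eq:majorana_gen_AR}. First I would show that each $S_{(q,r,s)}$ is completely anticommuting by picking two distinct elements $\gamma_A = \gamma_{2p}\gamma_{2q}\gamma_{2r+1}\gamma_{2s+1}$ and $\gamma_B = \gamma_{2p'}\gamma_{2q}\gamma_{2r+1}\gamma_{2s+1}$ with $p \neq p'$ (and both less than $q$). Since the even and odd single-mode indices never coincide, the supports $A$ and $B$ each contain exactly four distinct elements, while $A \cap B = \{2q, 2r+1, 2s+1\}$ has size three. Substituting $|A| = |B| = 4$ and $|A \cap B| = 3$ into~\cref{eq:majorana_gen_AR} gives exponent $|A||B| + |A \cap B| = 19$, which is odd, so $\{\gamma_A, \gamma_B\} = [1+(-1)^{19}]\gamma_A\gamma_B = 0$.

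Next I would verify that the collection partitions $\mathcal{M}$ and count the sets. Every operator in $\mathcal{M}$ has a canonical form $\gamma_{2p}\gamma_{2q}\gamma_{2r+1}\gamma_{2s+1}$ with $p<q$ and $r<s$, and this canonical form uniquely specifies the triple $(q,r,s)$; hence each element of $\mathcal{M}$ lies in exactly one $S_{(q,r,s)}$, and the sets are pairwise disjoint. For the cardinality, $q$ ranges over $N$ values in $\{0,\ldots,N-1\}$ while the constraint $r<s$ admits $\binom{N}{2}$ pairs $(r,s)$, yielding $N\binom{N}{2} = O(N^3)$ sets in total.

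The proof is largely bookkeeping once~\cref{eq:majorana_gen_AR} is available, so I do not foresee a serious obstacle. The one subtle point is the parity argument: I need to make sure that the even indices $\{2p,2q\}$ are always disjoint from the odd indices $\{2r+1,2s+1\}$, which is what forces the support size to be exactly four and the intersection of two same-set elements to be exactly three. As a closing remark, I would note that the $O(N^2)$ quadratic Majorana terms in~\cref{eq:H_e_majorana} can be incorporated either by appending them to compatible existing sets, or as at most $O(N^2)$ extra singleton sets, which preserves the overall $O(N^3)$ bound for the full electronic structure Hamiltonian.
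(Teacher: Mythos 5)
Your proof is correct and follows essentially the same route as the paper: pairwise anticommutation via the odd-size ($=3$) overlap of supports in Eq.~\eqref{eq:majorana_gen_AR}, disjointness and covering from the uniqueness of the canonical form, and a direct count of the triples. The only cosmetic difference is that you count $N\binom{N}{2}$ sets while the paper discards the empty set at $q=0$ and merges $S_{(1,r,s)}\cup S_{(2,r,s)}$ to get $\binom{N}{2}(N-2)$; both give $O(N^3)$.
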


\begin{proof}
By construction, all elements of $ S_{(q,r,s)} $ share support on exactly three indices, hence they all pairwise anticommute, per Eq.~\eqref{eq:majorana_gen_AR}. It is also straightforward to see that these sets form an exact cover of $ \mathcal{M} $:
\begin{align}
	\left| S_{(q,r,s)} \cap S_{(q',r',s')} \right| &= q \, \delta_{qq'} \delta_{rr'} \delta_{ss'}, \label{eq:disjointness}\\
	\bigcup_{\substack{q,r,s\\r<s}} S_{(q,r,s)} &= \mathcal{M}. \label{eq:covering}
\end{align}
There are $ \binom{N}{2} $ values that the pair $ (r,s) $ can take and $ N-1 $ values that $ q $ can take ($ q=0 $ yields the empty set, which we ignore). A slight optimization arises from the observation that the union $ S_{(1,r,s)} \cup S_{(2,r,s)} $ remains a completely anticommuting set. Hence there are a total of $ \binom{N}{2}(N-2) = O(N^3) $ such sets.
\end{proof}

We refer the reader to Appendix~\ref{sec:majorana_proof_details} for further details of the above proof. Although there are only $ O(N^2) $ quadratic terms, hence not affecting the asymptotic scaling of Theorem~\ref{thm:cubicPartitionTheorem}, they can in fact be included in the above construction with no additional overhead. Intuitively, since there are at most $ N^2 $ such operators which need to be placed into $ O(N^3) $ sets, one has a great deal of freedom in how to allocate them. As one example, consider the set
\begin{equation}
	T_p = \{ i\gamma_{2p}\gamma_{2q+1} \mid 0 \leq q \leq N-1 \}
\end{equation}
for some fixed $ p $. Then all the elements of $ T_p $ anticommute with all of some $ S_{(p,r,s)} $, except for those with $ q=r $ or $ q=s $. The new completely anticommuting set then becomes
\begin{equation}\label{eq:Sjkm_quad_1}
	S_{(p,r,s)} \cup T_p \setminus \{ i\gamma_{2p}\gamma_{2r+1}, i\gamma_{2p}\gamma_{2s+1} \},
\end{equation}
and those two excluded operators can be placed with any other $ S_{(p,r',s')} $, where all of $ r,r',s $, and $ s' $ are different:
\begin{equation}\label{eq:Sjkm_quad_2}
	S_{(p,r',s')} \cup \{ i\gamma_{2p}\gamma_{2r+1}, i\gamma_{2p}\gamma_{2s+1} \}.
\end{equation}
Since there are $ N $ such sets $ T_p $, this procedure combines all possible $ N^2 $ quadratic operators with only $ 2N $ of the preexisting sets of quartic operators.

We emphasize that the partition presented here is not an optimal solution to the problem. Rather, it demonstrates that even in the worst case one can always achieve term reduction by at least a factor of $ O(N) $. For a practical demonstration, we now move to numerical studies of specific molecular Hamiltonians.

\subsection{Pauli-level colouring and numerics}\label{subsec:pauliLevelColouring}

The above analysis demonstrates a reduction in difficulty of VQE by considering the number of fully anticommuting sets of terms in the electronic Hamiltonian.  Equivalently, we may consider fully anticommuting sets of terms at the level of Pauli strings, i.e.~subsequent to transforming the electronic Hamiltonian with, for example, the Jordan--Wigner or Bravyi--Kitaev mappings.  This approach could hold advantage by allowing the combination of duplicate strings and allowing the combination of anticommuting Pauli subterms between different fermionic terms.  However, once the fermion-to-qubit mapping is applied, the natural symmetries of the spatial molecular orbital integrals are embedded into a complex structure.  Moreover, the anticommutativity structure of the resulting Pauli terms is difficult to predict.  As such, we turn to numerical methods.

The key metric here is the number of fully anticommuting sets in the Pauli Hamiltonian.  As discussed in \cref{unitarypartitioning}, this is equivalent to a colouring of the compatibility graph---the graph composed of nodes corresponding to terms, with edges drawn where terms commute.  Optimal graph colouring is an NP-hard problem~\cite{gareySimplifiedNPcompleteProblems1974a},  but many approximate algorithms exist~\cite{kosowskiClassicalColoringGraphs2004}.  While minimising the number of sets is advantageous for reducing the number of measurements needed, an approximate solution is sufficient, and diminishing returns are obtained from improving the quality of the approximation.

In order to assess whether this strategy is viable for molecular Hamiltonians, we generated colouring schemes for $65$ Hamiltonians (previously used in Refs.~\cite{tranterComparisonBravyiKitaev2018,tranterOrderingTrotterizationImpact2019} and described in Appendix~\ref{appendix:systems}).  Geometry specifications were obtained from the NIST CCBDB database~\cite{johnsoniiiNISTComputationalChemistry2016}.  Molecular orbital integrals in the Hartree--Fock basis were gathered using the Psi4 package~\cite{parrishPsi4OpenSourceElectronic2017} and OpenFermion~\cite{mccleanOpenFermionElectronicStructure2017}.  Our code was then used to generate Jordan--Wigner and Bravyi--Kitaev Hamiltonians, which were divided into anticommuting subsets using the NetworkX Python package~\cite{SciPyProceedings_11} and the greedy independent sets strategy~\cite{kosowskiClassicalColoringGraphs2004}.  As our focus was on quantifying whether the term reduction technique is viable, alternative colouring strategies were not considered;~such an analysis was performed in~\cite{izmaylov19a}.  Our colouring strategy here is relatively computationally expensive, limiting our analysis to a maximum of 36 spin orbitals, with only three systems involving 30 or more.  While our code is unoptimised and can likely be improved upon, this does indicate that it would be difficult to extend this approach to larger systems.  The Majorana-based scheme of Section~\ref{subsec:linearReductionInTerms} was also used to partition the Hamiltonians.  In contrast to the greedy colouring strategy, this  does not require extensive classical computational resources.

\begin{figure}
    \centering
    \includegraphics[width=\columnwidth]{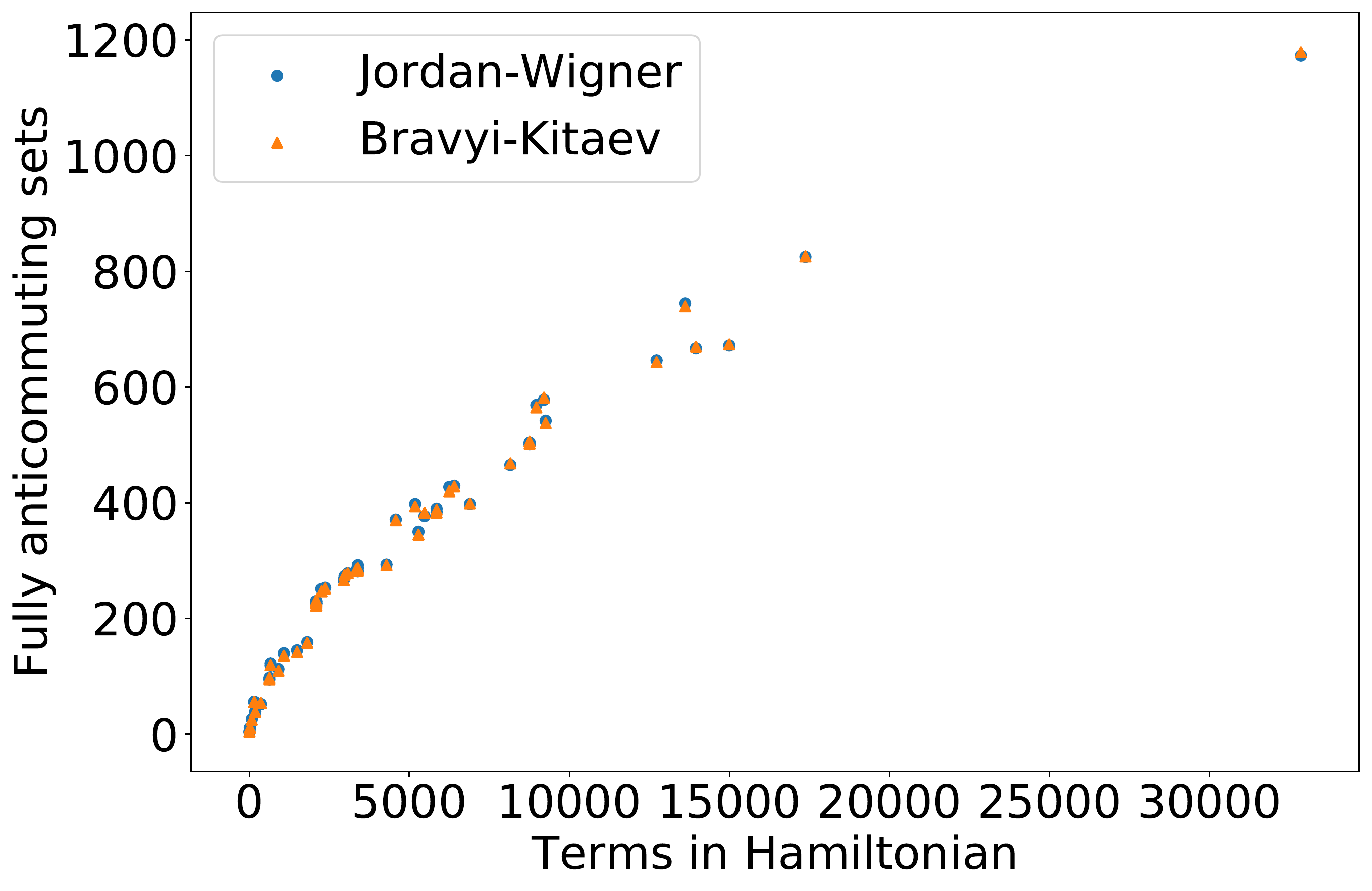}
    \caption{Number of fully anticommuting sets for electronic structure Hamiltonians versus the number of terms in the full Pauli Hamiltonian, using the greedy independent sets strategy.  The number of fully anticommuting sets is at least an order of magnitude less than the number of terms. The Jordan--Wigner and Bravyi--Kitaev mappings perform almost equivalently.}
    \label{fig:numSetsVersusTerms}
\end{figure}

Figure~\ref{fig:numSetsVersusTerms} shows the number of fully anticommuting sets obtained versus the number of terms in the Hamiltonian.  The number of fully anticommuting sets is approximately an order of magnitude less than the number of terms. The choice of Jordan--Wigner and Bravyi--Kitaev mapping does not appear to meaningfully affect the number of fully anticommuting sets found, as the anticommutativity structure is dependent on the underlying molecular Hamiltonian.  Encouragingly, the agreement demonstrated here by Figure~\ref{fig:numSetsVersusTerms} suggests that the greedy independent set strategy is finding close-to-optimal colourings.

The results for both partitioning schemes against the number of spin orbitals are depicted in Figure~\ref{fig:numSets}.  Both the numerical implementation of the Majorana-based construction and the greedy colouring scheme prove to be consistently effective.  Beyond the smallest Hamiltonians, a roughly linear trend between the number of sets found and the number of Hamiltonian terms is observed, demonstrating that the asymptotic improvement discussed in \cref{subsec:linearReductionInTerms} can be achieved when using numerical approaches to colouring Pauli Hamiltonians.  The numerical Majorana results, and the greedy colouring strategy, consistently outperform the analytic upper bound, as expected.  This may be attributed primarily to the sparsity in the $h_{pq}$ and $h_{pqrs}$ weights, due to geometric molecular symmetries and the locality of the basis functions.  The ratio of the number of terms to the number of sets also appears to increase linearly with the number of spin orbitals (albeit with high variance), in agreement with the scaling properties discussed in \cref{subsec:linearReductionInTerms}.

\begin{figure*}
    \centering
    \includegraphics[width=\textwidth]{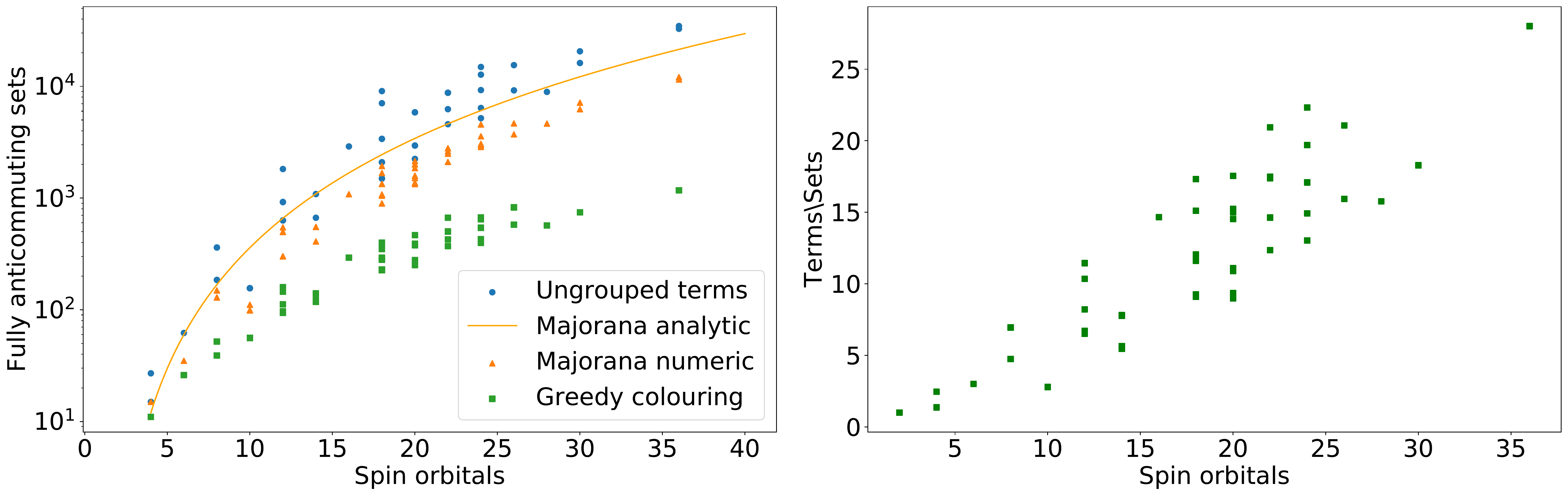}
    \caption{Number of fully anticommuting sets for electronic structure Hamiltonians versus the number of spin orbitals, using the Jordan--Wigner mapping.   Left:~Including all partitioning schemes. The ``Majorana analytic'' curve is the $ \binom{N}{2}(N-2) $ upper bound obtained from \cref{thm:cubicPartitionTheorem} for generic Hamiltonians of Eq.~\eqref{eq:H_e_majorana}. The ``Majorana numeric'' data points correspond to the partitions described in \cref{subsec:linearReductionInTerms} without further optimisation. This upper bound is loose, due to sparsity in the molecular Hamiltonians versus the set of all possible terms.  Right:~Ratio of the number of terms to the number of anticommuting sets, for systems with more than 5 spin orbitals.  A roughly linear trend is observed, in agreement with the analytic scaling discussed in \cref{subsec:linearReductionInTerms}.}
    \label{fig:numSets}
\end{figure*}

The greedy colouring scheme yields roughly a factor of $10$ improvement over the numerical Majorana scheme, suggesting that it may be of substantial use in NISQ VQE experiments.  However, it should be emphasised that the substantial classical computing resources required may inhibit its use for systems with more spin orbitals.  The Majorana-based scheme demonstrates the same term reduction scaling, but with substantially reduced classical overhead.

Although these results are promising, they do not consider the difficulty of performing the additional coherent operations required for the term recombination procedure. In principle, our analytic construction of anticommuting sets in Section~\ref{subsec:linearReductionInTerms} requires only $ O(N) $ depth circuits under the Jordan--Wigner mapping. This can be shown using well-known gate-compiling techniques~\cite{whitfieldSimulation2011,hastingsImprovingQuantumAlgorithms2015}.  Figure~\ref{fig:circuits} shows that the length of the circuits grows slowly in comparison to the amount of terms in the Hamiltonian.  However, near-term quantum devices are likely to be heavily constrained in the number of operations that can be performed coherently.  As such, it is likely that it will not be possible to combine entire sets of anticommuting terms.  Crucially, however, the term recombination procedure can be applied to \emph{subsets} of the fully anticommuting sets.  Provided the available coherent resources can be quantified prior to execution of the circuits, subsets of terms can be found to maximally use such resources to reduce the overall number of measurements required.  This yields a hardware-dependent tunable parameter---for example, the number of gates that can be implemented coherently subsequent to ansatz preparation---introduced at compile time.  This parameter allows for optimal use of the quantum resources provided by a given hardware option.

In order to assess the implications of varying such a parameter, we generated circuits corresponding to the implementation of the term reduction procedure for each Hamiltonian, introducing a maximum post-ansatz preparation gate count parameter.  For simplicity, these circuits used the standard method of implementing exponentiated Pauli strings given in \cref{unitarypartitioning}, rather than the ALCU circuits of \cref{rotations}.  Where circuits exceeded this length, the corresponding anticommuting set was split in half and new circuits were generated.  This binary splitting process was iterated until sufficiently short circuits were found.  Adjacent self-inverse gates were cancelled, moving through commuting gates where necessary~\cite{hastingsImprovingQuantumAlgorithms2015}.  For verification purposes, we calculated the expectation values with the true ground state of the Hamiltonians predicted by the circuits for systems with less than ten qubits.  As the results presented in Figure~\ref{fig:numSets} suggest that there is little difference between Jordan--Wigner and Bravyi--Kitaev circuits, we consider only Jordan--Wigner circuits.

\begin{figure*}
    \centering
    \includegraphics[width=\textwidth]{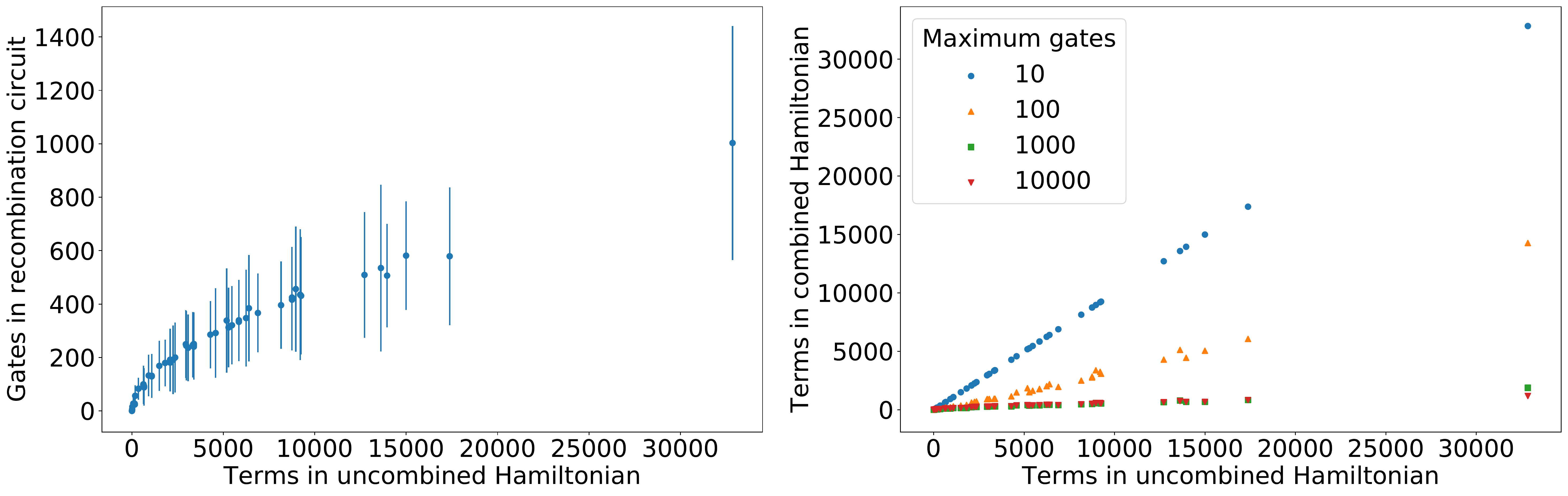}
    \caption{Resource requirements for full and partial term reduction using the greedy algorithm for partitioning. Left:~Average post-ansatz gates required for full term reduction.  Whiskers denote one standard deviation in the length of the circuit required for each anticommuting partition in the Hamiltonian.  The growth in circuit length is dramatically slower than the growth in the number of Hamiltonian terms, but displays high variance between anticommuting sets. Right:~Reduction in the number of required independent expectation values, given restrictions on maximum individual circuit length.  With highly restricted circuit lengths, term recombination is largely impossible.  However, roughly 1000 additional gates at most are sufficient to perform near-maximal term reduction for the molecules considered here (up to 36 spin orbitals), which is in agreement with the figure to the left.}
    \label{fig:circuits}
\end{figure*}

Figure~\ref{fig:circuits} shows the results of this process.  Using a maximum circuit length of $10\,000$ gates subsequent to ansatz preparation allows all anticommuting sets, in all Hamiltonians, to be combined.  Allowing only $10$ gates removes any possibility of term recombination.  Encouragingly, allowing $100$ gates does not dramatically impede term recombination.  Even for the longest circuit considered, using $100$ gates allows for a reduction in terms by a factor of over 2.  Allowing $1000$ postansatz gates similarly performs as well as full anticommuting set recombination in all systems apart from the bromine atom;~in this instance, the difference between the $1000$- and $10\,000$- gate decompositions is minor. 

Our choice of allowable circuit length here is intended to be illustrative of the practicality of the term recombination procedure.  In a true simulation, the maximum post-ansatz gates parameter should be set to a value that is empirically determined by the ability of the hardware and should not be restricted to an integer power of 10.  Given the relatively low gate counts required for substantial improvement with regard to the number of terms, the results here strongly suggest that this approach is an effective way of reducing the overall runtime of variational quantum algorithms for electronic structure.

\subsection{The plane-wave dual basis}\label{dual}

The use of a plane-wave basis is well established for condensed-matter systems. The plane-wave and plane-wave dual basis was recently used in the context of quantum simulation of quantum chemistry to express the Hamiltonian with a number of terms scaling quadratically with the number of basis functions~\cite{babbush2018low}. While suitable for periodic systems, the plane-wave dual basis requires a constant factor of additional spin orbitals to achieve the same accuracy as Gaussian-type orbitals for nonperiodic systems such as molecules. Thus the choice of basis set depends highly on the system under consideration, especially for near-term applications.

The qubit Hamiltonian obtained from the Jordan--Wigner transformation is (see Eq.~(9) in~\cite{babbush2018low}):
\begin{equation}
\begin{split}
\label{jwhampwb}
H & =  \sum_{\substack{p, \sigma \\ \nu \neq 0}} \! \left( \! \frac{\pi}{\Omega \, k_\nu^2} \! - \! \frac{k_\nu^2}{4 \, N} \! + \! \frac{2\pi}{\Omega} \! \sum_{j}\zeta_j \frac{\cos\left[k_\nu \! \cdot \! \left(R_j \!- \! r_p\right)\right]}{k_\nu^2} \! \right) \! Z_{p,\sigma}\\
&+ \frac{\pi}{2\,\Omega } \sum_{\substack{(p, \sigma) \neq (q, \sigma') \\ \nu \neq 0}} \frac{\cos \left[k_\nu \cdot r_{p-q}\right]}{k_\nu^2} Z_{p,\sigma} Z_{q,\sigma'}\\
& + \frac{1}{4\, N} \sum_{\substack{p \neq q \\ \nu, \sigma}} k_\nu^2 \cos \left[k_\nu \cdot r_{q - p} \right] X_{p,\sigma} Z_{p + 1,\sigma} \cdots Z_{q - 1,\sigma} X_{q,\sigma}\\
& + \frac{1}{4\, N} \sum_{\substack{p \neq q \\ \nu, \sigma}} k_\nu^2 \cos \left[k_\nu \cdot r_{q - p} \right] Y_{p,\sigma} Z_{p + 1,\sigma} \cdots Z_{q - 1,\sigma} Y_{q,\sigma} \\
&+ \sum_{\nu \neq 0} \left(\frac{k_\nu^2}{2}- \frac{\pi \, N}{\Omega \, k_\nu^2} \right) {\openone}.
\end{split}
\end{equation}
The labels $p$ run over $N$ basis functions, and so by inspection we see that the number of terms in the Hamiltonian is $O(N^2)$. Also by inspection, we can identify a set of $N^2$ commuting operators $Z_{p,\sigma}Z_{q,\sigma'}$. Thus, we can immediately conclude that unitary partitioning cannot reduce the asymptotic number of terms in this Hamiltonian.

However, we may use unitary partitioning to reduce the number of terms by a constant factor.
We can identify sets of anticommuting terms from \cref{jwhampwb} as follows. Define the sets
\be
\begin{split}
A_p &=\{Z_p\}\cup\{X_{p-1}X_p\}\cup\{Y_{p}Y_{p+1}\}\\
&\cup\{Y_pZ_{[p+1,p+l+1]}Y_{p+l+2} \mid 0\leq l\leq N-p-3\}\\
&\cup\{X_lZ_{[l+1,p-1]}X_p \mid 0\leq l\leq p-2\}.
\end{split}
\ee
There are $N$ operators in each set $A_p$, all of which pairwise anticommute. All sets $A_p$ are distinct and so unitary partitioning can reduce each set $A_p$ to a single term. This results in a fractional reduction in the number of terms of $(2N+1)/(4N-1)$, giving an asymptotic reduction in the number of terms by a factor of 2. 

\section{Noncontextual Hamiltonians}
\label{noncontextual}

In Ref.~\cite{kirby19a}, \emph{contextuality} of a Pauli Hamiltonian is defined as the condition under which it is impossible to consistently assign values to the Pauli terms in the Hamiltonian.
Contextuality, if present, is a manifestation of nonclassicality of the Hamiltonian.
Contextuality of a Hamiltonian is determined by the following criterion on the set $\mathcal{S}$ of Pauli terms~\cite{kirby19a}:~first, let $\mathcal{Z}\subseteq\mathcal{S}$ be the set of terms that commute with all other terms, and let $\mathcal{T}\equiv\mathcal{S}\setminus\mathcal{Z}$.
Then $\mathcal{S}$ is noncontextual if and only if commutation is an equivalence relation on $\mathcal{T}$.
In other words, if and only if $\mathcal{S}$ is noncontextual, $\mathcal{T}$ partitions into a union of disjoint cliques $C_1,C_2,\ldots,C_N$ such that operators in different cliques anticommute, while operators in the same clique commute (so in the graph-theoretic sense these are cliques in the compatibility graph).

We now show that, using the term reduction technique presented above, we can map any noncontextual Hamiltonian to a commuting Hamiltonian.
First, as shown in~\cite{kirby19a}, we can check that the Hamiltonian is noncontextual in $O(|\mathcal{S}|^3)$ time.
Given that the Hamiltonian is noncontextual, we know that it has the structure described above:~we can find the cliques $C_i$ as well as $\mathcal{Z}$ in $O(|\mathcal{S}|^2)$ time.

To map these terms to a commuting set, find a largest clique, and without loss of generality let it be $C_1$.
Then construct a set $D_1$ by selecting exactly one element from each of the $C_i$ ($D_1$ is a minimal hitting set on the $C_i$).
Similarly, construct $D_2$ by selecting exactly one element from each of the $C_i$ after removing the elements in $D_1$, and so forth, until we have covered $\mathcal{T}$ with disjoint sets $D_1,D_2,\ldots,D_M$, where $M=|C_1|$.
Letting $C_{ij}$ denote the $j$th element of $C_i$, we may visualize the $D_j$ as
\be
\begin{array}{c|cccc}
    ~&D_1&D_2&\cdots&D_M\\
    \hline
    C_1&C_{11}&C_{12}&\cdots&C_{1M}\\
    C_2&C_{21}&C_{22}&\cdots&\cdots\\
    \vdots&\vdots&\vdots&\ddots&\vdots\\
    C_N&C_{N1}&\cdots&\cdots&\cdots
\end{array}\quad.
\ee
Since $C_1$ is a largest clique, it is guaranteed to have nonempty intersection with all of the sets $D_j$.
Elements of different cliques anticommute, so each of the $D_j$ is completely anticommuting.
Therefore, we can use the techniques described in \cref{unitarypartitioning,rotations} to construct Pauli rotations $ R_{D_j}$ that map the operators in each $D_j$ to the operator $C_{1j}$, the single operator in $D_j\cap C_1$.

The resulting Hamiltonian has terms $\mathcal{Z}\cup C_1$, which commute, since by definition the operators in $\mathcal{Z}$ commute with all operators in $\mathcal{S}$, and the operators in $C_1$ also commute with each other.
Thus any noncontextual Hamiltonian may be mapped to a set of commuting terms that form an effective commuting Hamiltonian, using as a resource only the ability to append the additional Pauli rotations $ R_{D_j}$ to the state preparations as in \cref{rotatedhamiltonian}.
It is important to note, however, that the commuting Hamiltonian is not unitarily equivalent to the noncontextual Hamiltonian, since the rotations required to map each set $D_j$ to a single operator vary with $j$.

\section{Conclusions}

In this paper we have discussed unitary partitioning---the technique for using anticommuting sets of Hamiltonian terms to reduce the number of measurements needed when performing variational quantum algorithms.  Applying this technique to transverse Ising models and random Hamiltonians resulted in a constant factor improvement in the number of independent expectation value estimations required.  However, applying the technique to electronic structure Hamiltonians yielded greater reduction, scaling linearly with the number of qubits.

The dramatic growth in the number of independent expectation values that must be determined is a key problem in the use of variational quantum algorithms for quantum chemistry in the NISQ era. Due to the nature of the plane-wave dual basis representation, in which one defines a basis that yields only $O(N^2)$ nonzero electronic Hamiltonian weights, we observed only a constant factor reduction in terms with unitary partitioning. However, using generic molecular orbital basis sets, we were able to obtain a reduction that scales linearly. We proved this result in Section~\ref{subsec:linearReductionInTerms} and confirmed its practicality by numerics in Section~\ref{subsec:pauliLevelColouring}.

We report two strategies for partitioning the electronic structure Hamiltonian into fully anticommuting subsets.  The first of these, based on expressing the fermionic Hamiltonian using Majorana operators, demonstrates the favourable scaling properties, and can be rapidly performed for even large numbers of spin orbitals.  Conversely, using a greedy colouring scheme is relatively expensive with regard to classical computational resources, but demonstrates an order-of-magnitude reduction, even for relatively small systems (less than $30$ qubits).
The latter scheme is likely to be useful in NISQ applications where systems are small and greedy solutions can be feasibly computed.  The former yields the same scaling, and is not restricted by the cost of colouring algorithms, but suffers from a constant factor overhead in the number of fully anticommuting sets, compared to the greedy colouring method. The availability of postansatz coherent resources, and the relative difficulty of the classical partitioning step, may determine which scheme is favoured.

Finally, in \cref{noncontextual} we studied the class of noncontextual Hamiltonians, as defined in~\cite{kirby19a}. The presence of contextuality in a quantum system provides a barrier to a classical description of the system. Here, we have shown that any noncontextual Hamiltonian (which lacks this separation from a classical Hamiltonian) may be transformed into a Hamiltonian of fully commuting terms, using only the rotations developed in \cref{unitarypartitioning}. This result helps us further understand the connection between noncontextual Hamiltonians and commuting Hamiltonians, and it adds support to the notion that VQE experiments should focus on contextual Hamiltonians~\cite{kirby19a}.

Our analysis of circuits for implementing the unitary partitioning procedure indicates that relatively modest additional coherent resources are required, compared to those typically needed for ansatz preparation.  Crucially, this optimisation is tunable, allowing for optimal use of coherent resources by hardware-dependent parameterisation at compile time.  It is also likely that unitary partitioning is compatible with other aspects of VQE optimisation. For instance, while we have remained agnostic to the choice of the parametrised ansatz for this study, the form of the unitaries required to perform term reduction matches those of popular ansatz choices, such as the unitary coupled cluster and related methods~\cite{romeroStrategiesQuantumComputing2018,ryabinkin2018qubit,leeGeneralizedUnitaryCoupled2019}. Thus with proper circuit compilation, one may significantly reduce the effective number of postansatz operations in practice, instead incorporating their rotation angles into the appropriate ansatz parameters. For these reasons, we believe that unitary partitioning could substantially aid in the use of variational quantum algorithms for studying classically intractable systems.

\section*{Acknowledgements}
The authors would like to thank Alexis Ralli for productive discussions.
This work was supported by the National Science Foundation STAQ project (Grant No.~PHY-1818914).
W.M.K.~additionally acknowledges support from the National Science Foundation (Grant No.~DGE-1842474).

\bibliography{references}

\begin{thebibliography}{67}%
\makeatletter
\providecommand \@ifxundefined [1]{%
 \@ifx{#1\undefined}
}%
\providecommand \@ifnum [1]{%
 \ifnum #1\expandafter \@firstoftwo
 \else \expandafter \@secondoftwo
 \fi
}%
\providecommand \@ifx [1]{%
 \ifx #1\expandafter \@firstoftwo
 \else \expandafter \@secondoftwo
 \fi
}%
\providecommand \natexlab [1]{#1}%
\providecommand \enquote  [1]{``#1''}%
\providecommand \bibnamefont  [1]{#1}%
\providecommand \bibfnamefont [1]{#1}%
\providecommand \citenamefont [1]{#1}%
\providecommand \href@noop [0]{\@secondoftwo}%
\providecommand \href [0]{\begingroup \@sanitize@url \@href}%
\providecommand \@href[1]{\@@startlink{#1}\@@href}%
\providecommand \@@href[1]{\endgroup#1\@@endlink}%
\providecommand \@sanitize@url [0]{\catcode `\\12\catcode `\$12\catcode
  `\&12\catcode `\#12\catcode `\^12\catcode `\_12\catcode `\%12\relax}%
\providecommand \@@startlink[1]{}%
\providecommand \@@endlink[0]{}%
\providecommand \url  [0]{\begingroup\@sanitize@url \@url }%
\providecommand \@url [1]{\endgroup\@href {#1}{\urlprefix }}%
\providecommand \urlprefix  [0]{URL }%
\providecommand \Eprint [0]{\href }%
\providecommand \doibase [0]{http://dx.doi.org/}%
\providecommand \selectlanguage [0]{\@gobble}%
\providecommand \bibinfo  [0]{\@secondoftwo}%
\providecommand \bibfield  [0]{\@secondoftwo}%
\providecommand \translation [1]{[#1]}%
\providecommand \BibitemOpen [0]{}%
\providecommand \bibitemStop [0]{}%
\providecommand \bibitemNoStop [0]{.\EOS\space}%
\providecommand \EOS [0]{\spacefactor3000\relax}%
\providecommand \BibitemShut  [1]{\csname bibitem#1\endcsname}%
\let\auto@bib@innerbib\@empty
\bibitem [{\citenamefont {Feynman}(1982)}]{Feynman1982}%
  \BibitemOpen
  \bibfield  {author} {\bibinfo {author} {\bibfnamefont {R.~P.}\ \bibnamefont
  {Feynman}},\ }\href {\doibase 10.1007/BF02650179} {\bibfield  {journal}
  {\bibinfo  {journal} {International Journal of Theoretical Physics}\ }\textbf
  {\bibinfo {volume} {21}},\ \bibinfo {pages} {467} (\bibinfo {year}
  {1982})}\BibitemShut {NoStop}%
\bibitem [{\citenamefont {Lloyd}(1996)}]{lloyd1996universal}%
  \BibitemOpen
  \bibfield  {author} {\bibinfo {author} {\bibfnamefont {S.}~\bibnamefont
  {Lloyd}},\ }\href@noop {} {\bibfield  {journal} {\bibinfo  {journal}
  {Science}\ }\textbf {\bibinfo {volume} {273}},\ \bibinfo {pages} {1073}
  (\bibinfo {year} {1996})}\BibitemShut {NoStop}%
\bibitem [{\citenamefont {Abrams}\ and\ \citenamefont
  {Lloyd}(1997)}]{abrams1997simulation}%
  \BibitemOpen
  \bibfield  {author} {\bibinfo {author} {\bibfnamefont {D.~S.}\ \bibnamefont
  {Abrams}}\ and\ \bibinfo {author} {\bibfnamefont {S.}~\bibnamefont {Lloyd}},\
  }\href@noop {} {\bibfield  {journal} {\bibinfo  {journal} {Physical Review
  Letters}\ }\textbf {\bibinfo {volume} {79}},\ \bibinfo {pages} {2586}
  (\bibinfo {year} {1997})}\BibitemShut {NoStop}%
\bibitem [{\citenamefont {Abrams}\ and\ \citenamefont
  {Lloyd}(1999)}]{abrams1999quantum}%
  \BibitemOpen
  \bibfield  {author} {\bibinfo {author} {\bibfnamefont {D.~S.}\ \bibnamefont
  {Abrams}}\ and\ \bibinfo {author} {\bibfnamefont {S.}~\bibnamefont {Lloyd}},\
  }\href@noop {} {\bibfield  {journal} {\bibinfo  {journal} {Physical Review
  Letters}\ }\textbf {\bibinfo {volume} {83}},\ \bibinfo {pages} {5162}
  (\bibinfo {year} {1999})}\BibitemShut {NoStop}%
\bibitem [{\citenamefont {Wu}\ \emph {et~al.}(2002)\citenamefont {Wu},
  \citenamefont {Byrd},\ and\ \citenamefont {Lidar}}]{wu2002polynomial}%
  \BibitemOpen
  \bibfield  {author} {\bibinfo {author} {\bibfnamefont {L.-A.}\ \bibnamefont
  {Wu}}, \bibinfo {author} {\bibfnamefont {M.}~\bibnamefont {Byrd}}, \ and\
  \bibinfo {author} {\bibfnamefont {D.}~\bibnamefont {Lidar}},\ }\href@noop {}
  {\bibfield  {journal} {\bibinfo  {journal} {Physical Review Letters}\
  }\textbf {\bibinfo {volume} {89}},\ \bibinfo {pages} {057904} (\bibinfo
  {year} {2002})}\BibitemShut {NoStop}%
\bibitem [{\citenamefont {Aspuru-Guzik}\ \emph {et~al.}(2005)\citenamefont
  {Aspuru-Guzik}, \citenamefont {Dutoi}, \citenamefont {Love},\ and\
  \citenamefont {Head-Gordon}}]{aspuru2005simulated}%
  \BibitemOpen
  \bibfield  {author} {\bibinfo {author} {\bibfnamefont {A.}~\bibnamefont
  {Aspuru-Guzik}}, \bibinfo {author} {\bibfnamefont {A.~D.}\ \bibnamefont
  {Dutoi}}, \bibinfo {author} {\bibfnamefont {P.~J.}\ \bibnamefont {Love}}, \
  and\ \bibinfo {author} {\bibfnamefont {M.}~\bibnamefont {Head-Gordon}},\
  }\href@noop {} {\bibfield  {journal} {\bibinfo  {journal} {Science}\ }\textbf
  {\bibinfo {volume} {309}},\ \bibinfo {pages} {1704} (\bibinfo {year}
  {2005})}\BibitemShut {NoStop}%
\bibitem [{\citenamefont {Jordan}\ \emph {et~al.}(2012)\citenamefont {Jordan},
  \citenamefont {Lee},\ and\ \citenamefont {Preskill}}]{preskill1}%
  \BibitemOpen
  \bibfield  {author} {\bibinfo {author} {\bibfnamefont {S.~P.}\ \bibnamefont
  {Jordan}}, \bibinfo {author} {\bibfnamefont {K.~S.~M.}\ \bibnamefont {Lee}},
  \ and\ \bibinfo {author} {\bibfnamefont {J.}~\bibnamefont {Preskill}},\
  }\href {\doibase 10.1126/science.1217069} {\bibfield  {journal} {\bibinfo
  {journal} {Science}\ }\textbf {\bibinfo {volume} {336}},\ \bibinfo {pages}
  {1130} (\bibinfo {year} {2012})},\ \Eprint {http://arxiv.org/abs/1111.3633}
  {arXiv:1111.3633 [quant-ph]} \BibitemShut {NoStop}%
\bibitem [{\citenamefont {Babbush}\ \emph
  {et~al.}(2018{\natexlab{a}})\citenamefont {Babbush}, \citenamefont {Gidney},
  \citenamefont {Berry}, \citenamefont {Wiebe}, \citenamefont {McClean},
  \citenamefont {Paler}, \citenamefont {Fowler},\ and\ \citenamefont
  {Neven}}]{babbush2018encoding}%
  \BibitemOpen
  \bibfield  {author} {\bibinfo {author} {\bibfnamefont {R.}~\bibnamefont
  {Babbush}}, \bibinfo {author} {\bibfnamefont {C.}~\bibnamefont {Gidney}},
  \bibinfo {author} {\bibfnamefont {D.~W.}\ \bibnamefont {Berry}}, \bibinfo
  {author} {\bibfnamefont {N.}~\bibnamefont {Wiebe}}, \bibinfo {author}
  {\bibfnamefont {J.}~\bibnamefont {McClean}}, \bibinfo {author} {\bibfnamefont
  {A.}~\bibnamefont {Paler}}, \bibinfo {author} {\bibfnamefont
  {A.}~\bibnamefont {Fowler}}, \ and\ \bibinfo {author} {\bibfnamefont
  {H.}~\bibnamefont {Neven}},\ }\href {\doibase 10.1103/PhysRevX.8.041015}
  {\bibfield  {journal} {\bibinfo  {journal} {Phys. Rev. X}\ }\textbf {\bibinfo
  {volume} {8}},\ \bibinfo {pages} {041015} (\bibinfo {year}
  {2018}{\natexlab{a}})}\BibitemShut {NoStop}%
\bibitem [{\citenamefont {Preskill}(2018)}]{preskill2018quantum}%
  \BibitemOpen
  \bibfield  {author} {\bibinfo {author} {\bibfnamefont {J.}~\bibnamefont
  {Preskill}},\ }\href@noop {} {\bibfield  {journal} {\bibinfo  {journal}
  {Quantum}\ }\textbf {\bibinfo {volume} {2}},\ \bibinfo {pages} {79} (\bibinfo
  {year} {2018})}\BibitemShut {NoStop}%
\bibitem [{\citenamefont {Arute}\ \emph {et~al.}(2019)\citenamefont {Arute}
  \emph {et~al.}}]{arute2019quantum}%
  \BibitemOpen
  \bibfield  {author} {\bibinfo {author} {\bibfnamefont {F.}~\bibnamefont
  {Arute}} \emph {et~al.},\ }\href@noop {} {\bibfield  {journal} {\bibinfo
  {journal} {Nature}\ }\textbf {\bibinfo {volume} {574}},\ \bibinfo {pages}
  {505} (\bibinfo {year} {2019})}\BibitemShut {NoStop}%
\bibitem [{\citenamefont {Boixo}\ \emph {et~al.}(2018)\citenamefont {Boixo},
  \citenamefont {Isakov}, \citenamefont {Smelyanskiy}, \citenamefont {Babbush},
  \citenamefont {Ding}, \citenamefont {Jiang}, \citenamefont {Bremner},
  \citenamefont {Martinis},\ and\ \citenamefont
  {Neven}}]{boixo2016characterizing}%
  \BibitemOpen
  \bibfield  {author} {\bibinfo {author} {\bibfnamefont {S.}~\bibnamefont
  {Boixo}}, \bibinfo {author} {\bibfnamefont {S.~V.}\ \bibnamefont {Isakov}},
  \bibinfo {author} {\bibfnamefont {V.~N.}\ \bibnamefont {Smelyanskiy}},
  \bibinfo {author} {\bibfnamefont {R.}~\bibnamefont {Babbush}}, \bibinfo
  {author} {\bibfnamefont {N.}~\bibnamefont {Ding}}, \bibinfo {author}
  {\bibfnamefont {Z.}~\bibnamefont {Jiang}}, \bibinfo {author} {\bibfnamefont
  {M.~J.}\ \bibnamefont {Bremner}}, \bibinfo {author} {\bibfnamefont {J.~M.}\
  \bibnamefont {Martinis}}, \ and\ \bibinfo {author} {\bibfnamefont
  {H.}~\bibnamefont {Neven}},\ }\href@noop {} {\bibfield  {journal} {\bibinfo
  {journal} {Nature Physics}\ }\textbf {\bibinfo {volume} {14}},\ \bibinfo
  {pages} {595} (\bibinfo {year} {2018})}\BibitemShut {NoStop}%
\bibitem [{\citenamefont {Harrow}\ and\ \citenamefont
  {Montanaro}(2017)}]{harrow2017quantum}%
  \BibitemOpen
  \bibfield  {author} {\bibinfo {author} {\bibfnamefont {A.~W.}\ \bibnamefont
  {Harrow}}\ and\ \bibinfo {author} {\bibfnamefont {A.}~\bibnamefont
  {Montanaro}},\ }\href@noop {} {\bibfield  {journal} {\bibinfo  {journal}
  {Nature}\ }\textbf {\bibinfo {volume} {549}},\ \bibinfo {pages} {203}
  (\bibinfo {year} {2017})}\BibitemShut {NoStop}%
\bibitem [{\citenamefont {Peruzzo}\ \emph {et~al.}(2014)\citenamefont
  {Peruzzo}, \citenamefont {McClean}, \citenamefont {Shadbolt}, \citenamefont
  {Yung}, \citenamefont {Zhou}, \citenamefont {Love}, \citenamefont
  {Aspuru-Guzik},\ and\ \citenamefont {O'brien}}]{peruzzo2014variational}%
  \BibitemOpen
  \bibfield  {author} {\bibinfo {author} {\bibfnamefont {A.}~\bibnamefont
  {Peruzzo}}, \bibinfo {author} {\bibfnamefont {J.}~\bibnamefont {McClean}},
  \bibinfo {author} {\bibfnamefont {P.}~\bibnamefont {Shadbolt}}, \bibinfo
  {author} {\bibfnamefont {M.-H.}\ \bibnamefont {Yung}}, \bibinfo {author}
  {\bibfnamefont {X.-Q.}\ \bibnamefont {Zhou}}, \bibinfo {author}
  {\bibfnamefont {P.~J.}\ \bibnamefont {Love}}, \bibinfo {author}
  {\bibfnamefont {A.}~\bibnamefont {Aspuru-Guzik}}, \ and\ \bibinfo {author}
  {\bibfnamefont {J.~L.}\ \bibnamefont {O'brien}},\ }\href@noop {} {\bibfield
  {journal} {\bibinfo  {journal} {Nature communications}\ }\textbf {\bibinfo
  {volume} {5}},\ \bibinfo {pages} {4213} (\bibinfo {year} {2014})}\BibitemShut
  {NoStop}%
\bibitem [{\citenamefont {Barrett}\ \emph {et~al.}(2013)\citenamefont
  {Barrett}, \citenamefont {Hammerer}, \citenamefont {Harrison}, \citenamefont
  {Northup},\ and\ \citenamefont {Osborne}}]{barrett2013simulating}%
  \BibitemOpen
  \bibfield  {author} {\bibinfo {author} {\bibfnamefont {S.}~\bibnamefont
  {Barrett}}, \bibinfo {author} {\bibfnamefont {K.}~\bibnamefont {Hammerer}},
  \bibinfo {author} {\bibfnamefont {S.}~\bibnamefont {Harrison}}, \bibinfo
  {author} {\bibfnamefont {T.~E.}\ \bibnamefont {Northup}}, \ and\ \bibinfo
  {author} {\bibfnamefont {T.~J.}\ \bibnamefont {Osborne}},\ }\href@noop {}
  {\bibfield  {journal} {\bibinfo  {journal} {Physical review letters}\
  }\textbf {\bibinfo {volume} {110}},\ \bibinfo {pages} {090501} (\bibinfo
  {year} {2013})}\BibitemShut {NoStop}%
\bibitem [{\citenamefont {Farhi}\ \emph {et~al.}(2014)\citenamefont {Farhi},
  \citenamefont {Goldstone},\ and\ \citenamefont {Gutmann}}]{farhi2014quantum}%
  \BibitemOpen
  \bibfield  {author} {\bibinfo {author} {\bibfnamefont {E.}~\bibnamefont
  {Farhi}}, \bibinfo {author} {\bibfnamefont {J.}~\bibnamefont {Goldstone}}, \
  and\ \bibinfo {author} {\bibfnamefont {S.}~\bibnamefont {Gutmann}},\
  }\href@noop {} {\bibfield  {journal} {\bibinfo  {journal} {arXiv preprint
  arXiv:1411.4028}\ } (\bibinfo {year} {2014})}\BibitemShut {NoStop}%
\bibitem [{\citenamefont {Wang}\ \emph {et~al.}(2015)\citenamefont {Wang},
  \citenamefont {Dolde}, \citenamefont {Biamonte}, \citenamefont {Babbush},
  \citenamefont {Bergholm}, \citenamefont {Yang}, \citenamefont {Jakobi},
  \citenamefont {Neumann}, \citenamefont {Aspuru-Guzik}, \citenamefont
  {Whitfield},\ and\ \citenamefont {Wrachtrup}}]{wang2015quantum}%
  \BibitemOpen
  \bibfield  {author} {\bibinfo {author} {\bibfnamefont {Y.}~\bibnamefont
  {Wang}}, \bibinfo {author} {\bibfnamefont {F.}~\bibnamefont {Dolde}},
  \bibinfo {author} {\bibfnamefont {J.}~\bibnamefont {Biamonte}}, \bibinfo
  {author} {\bibfnamefont {R.}~\bibnamefont {Babbush}}, \bibinfo {author}
  {\bibfnamefont {V.}~\bibnamefont {Bergholm}}, \bibinfo {author}
  {\bibfnamefont {S.}~\bibnamefont {Yang}}, \bibinfo {author} {\bibfnamefont
  {I.}~\bibnamefont {Jakobi}}, \bibinfo {author} {\bibfnamefont
  {P.}~\bibnamefont {Neumann}}, \bibinfo {author} {\bibfnamefont
  {A.}~\bibnamefont {Aspuru-Guzik}}, \bibinfo {author} {\bibfnamefont {J.~D.}\
  \bibnamefont {Whitfield}}, \ and\ \bibinfo {author} {\bibfnamefont
  {J.}~\bibnamefont {Wrachtrup}},\ }\href@noop {} {\bibfield  {journal}
  {\bibinfo  {journal} {ACS Nano}\ }\textbf {\bibinfo {volume} {9}},\ \bibinfo
  {pages} {7769} (\bibinfo {year} {2015})}\BibitemShut {NoStop}%
\bibitem [{\citenamefont {O'Malley}\ \emph {et~al.}(2016)\citenamefont
  {O'Malley}, \citenamefont {Babbush}, \citenamefont {Kivlichan}, \citenamefont
  {Romero}, \citenamefont {McClean}, \citenamefont {Barends}, \citenamefont
  {Kelly}, \citenamefont {Roushan}, \citenamefont {Tranter}, \citenamefont
  {Ding}, \citenamefont {Campbell}, \citenamefont {Chen}, \citenamefont {Chen},
  \citenamefont {Chiaro}, \citenamefont {Dunsworth}, \citenamefont {Fowler},
  \citenamefont {Jeffrey}, \citenamefont {Lucero}, \citenamefont {Megrant},
  \citenamefont {Mutus}, \citenamefont {Neeley}, \citenamefont {Neill},
  \citenamefont {Quintana}, \citenamefont {Sank}, \citenamefont {Vainsencher},
  \citenamefont {Wenner}, \citenamefont {White}, \citenamefont {Coveney},
  \citenamefont {Love}, \citenamefont {Neven}, \citenamefont {Aspuru-Guzik},\
  and\ \citenamefont {Martinis}}]{omalley16a}%
  \BibitemOpen
  \bibfield  {author} {\bibinfo {author} {\bibfnamefont {P.~J.~J.}\
  \bibnamefont {O'Malley}}, \bibinfo {author} {\bibfnamefont {R.}~\bibnamefont
  {Babbush}}, \bibinfo {author} {\bibfnamefont {I.~D.}\ \bibnamefont
  {Kivlichan}}, \bibinfo {author} {\bibfnamefont {J.}~\bibnamefont {Romero}},
  \bibinfo {author} {\bibfnamefont {J.~R.}\ \bibnamefont {McClean}}, \bibinfo
  {author} {\bibfnamefont {R.}~\bibnamefont {Barends}}, \bibinfo {author}
  {\bibfnamefont {J.}~\bibnamefont {Kelly}}, \bibinfo {author} {\bibfnamefont
  {P.}~\bibnamefont {Roushan}}, \bibinfo {author} {\bibfnamefont
  {A.}~\bibnamefont {Tranter}}, \bibinfo {author} {\bibfnamefont
  {N.}~\bibnamefont {Ding}}, \bibinfo {author} {\bibfnamefont {B.}~\bibnamefont
  {Campbell}}, \bibinfo {author} {\bibfnamefont {Y.}~\bibnamefont {Chen}},
  \bibinfo {author} {\bibfnamefont {Z.}~\bibnamefont {Chen}}, \bibinfo {author}
  {\bibfnamefont {B.}~\bibnamefont {Chiaro}}, \bibinfo {author} {\bibfnamefont
  {A.}~\bibnamefont {Dunsworth}}, \bibinfo {author} {\bibfnamefont {A.~G.}\
  \bibnamefont {Fowler}}, \bibinfo {author} {\bibfnamefont {E.}~\bibnamefont
  {Jeffrey}}, \bibinfo {author} {\bibfnamefont {E.}~\bibnamefont {Lucero}},
  \bibinfo {author} {\bibfnamefont {A.}~\bibnamefont {Megrant}}, \bibinfo
  {author} {\bibfnamefont {J.~Y.}\ \bibnamefont {Mutus}}, \bibinfo {author}
  {\bibfnamefont {M.}~\bibnamefont {Neeley}}, \bibinfo {author} {\bibfnamefont
  {C.}~\bibnamefont {Neill}}, \bibinfo {author} {\bibfnamefont
  {C.}~\bibnamefont {Quintana}}, \bibinfo {author} {\bibfnamefont
  {D.}~\bibnamefont {Sank}}, \bibinfo {author} {\bibfnamefont {A.}~\bibnamefont
  {Vainsencher}}, \bibinfo {author} {\bibfnamefont {J.}~\bibnamefont {Wenner}},
  \bibinfo {author} {\bibfnamefont {T.~C.}\ \bibnamefont {White}}, \bibinfo
  {author} {\bibfnamefont {P.~V.}\ \bibnamefont {Coveney}}, \bibinfo {author}
  {\bibfnamefont {P.~J.}\ \bibnamefont {Love}}, \bibinfo {author}
  {\bibfnamefont {H.}~\bibnamefont {Neven}}, \bibinfo {author} {\bibfnamefont
  {A.}~\bibnamefont {Aspuru-Guzik}}, \ and\ \bibinfo {author} {\bibfnamefont
  {J.~M.}\ \bibnamefont {Martinis}},\ }\href {\doibase
  10.1103/PhysRevX.6.031007} {\bibfield  {journal} {\bibinfo  {journal} {Phys.
  Rev. X}\ }\textbf {\bibinfo {volume} {6}},\ \bibinfo {pages} {031007}
  (\bibinfo {year} {2016})}\BibitemShut {NoStop}%
\bibitem [{\citenamefont {Kandala}\ \emph {et~al.}(2017)\citenamefont
  {Kandala}, \citenamefont {Mezzacapo}, \citenamefont {Temme}, \citenamefont
  {Takita}, \citenamefont {Brink}, \citenamefont {Chow},\ and\ \citenamefont
  {Gambetta}}]{kandala2017hardware}%
  \BibitemOpen
  \bibfield  {author} {\bibinfo {author} {\bibfnamefont {A.}~\bibnamefont
  {Kandala}}, \bibinfo {author} {\bibfnamefont {A.}~\bibnamefont {Mezzacapo}},
  \bibinfo {author} {\bibfnamefont {K.}~\bibnamefont {Temme}}, \bibinfo
  {author} {\bibfnamefont {M.}~\bibnamefont {Takita}}, \bibinfo {author}
  {\bibfnamefont {M.}~\bibnamefont {Brink}}, \bibinfo {author} {\bibfnamefont
  {J.~M.}\ \bibnamefont {Chow}}, \ and\ \bibinfo {author} {\bibfnamefont
  {J.~M.}\ \bibnamefont {Gambetta}},\ }\href@noop {} {\bibfield  {journal}
  {\bibinfo  {journal} {Nature}\ }\textbf {\bibinfo {volume} {549}},\ \bibinfo
  {pages} {242} (\bibinfo {year} {2017})}\BibitemShut {NoStop}%
\bibitem [{\citenamefont {Hempel}\ \emph {et~al.}(2018)\citenamefont {Hempel},
  \citenamefont {Maier}, \citenamefont {Romero}, \citenamefont {McClean},
  \citenamefont {Monz}, \citenamefont {Shen}, \citenamefont {Jurcevic},
  \citenamefont {Lanyon}, \citenamefont {Love}, \citenamefont {Babbush},
  \citenamefont {Aspuru-Guzik}, \citenamefont {Blatt},\ and\ \citenamefont
  {Roos}}]{PhysRevX.8.031022}%
  \BibitemOpen
  \bibfield  {author} {\bibinfo {author} {\bibfnamefont {C.}~\bibnamefont
  {Hempel}}, \bibinfo {author} {\bibfnamefont {C.}~\bibnamefont {Maier}},
  \bibinfo {author} {\bibfnamefont {J.}~\bibnamefont {Romero}}, \bibinfo
  {author} {\bibfnamefont {J.}~\bibnamefont {McClean}}, \bibinfo {author}
  {\bibfnamefont {T.}~\bibnamefont {Monz}}, \bibinfo {author} {\bibfnamefont
  {H.}~\bibnamefont {Shen}}, \bibinfo {author} {\bibfnamefont {P.}~\bibnamefont
  {Jurcevic}}, \bibinfo {author} {\bibfnamefont {B.~P.}\ \bibnamefont
  {Lanyon}}, \bibinfo {author} {\bibfnamefont {P.}~\bibnamefont {Love}},
  \bibinfo {author} {\bibfnamefont {R.}~\bibnamefont {Babbush}}, \bibinfo
  {author} {\bibfnamefont {A.}~\bibnamefont {Aspuru-Guzik}}, \bibinfo {author}
  {\bibfnamefont {R.}~\bibnamefont {Blatt}}, \ and\ \bibinfo {author}
  {\bibfnamefont {C.~F.}\ \bibnamefont {Roos}},\ }\href {\doibase
  10.1103/PhysRevX.8.031022} {\bibfield  {journal} {\bibinfo  {journal} {Phys.
  Rev. X}\ }\textbf {\bibinfo {volume} {8}},\ \bibinfo {pages} {031022}
  (\bibinfo {year} {2018})}\BibitemShut {NoStop}%
\bibitem [{\citenamefont {Dumitrescu}\ \emph {et~al.}(2018)\citenamefont
  {Dumitrescu}, \citenamefont {McCaskey}, \citenamefont {Hagen}, \citenamefont
  {Jansen}, \citenamefont {Morris}, \citenamefont {Papenbrock}, \citenamefont
  {Pooser}, \citenamefont {Dean},\ and\ \citenamefont
  {Lougovski}}]{dumitrescu18a}%
  \BibitemOpen
  \bibfield  {author} {\bibinfo {author} {\bibfnamefont {E.~F.}\ \bibnamefont
  {Dumitrescu}}, \bibinfo {author} {\bibfnamefont {A.~J.}\ \bibnamefont
  {McCaskey}}, \bibinfo {author} {\bibfnamefont {G.}~\bibnamefont {Hagen}},
  \bibinfo {author} {\bibfnamefont {G.~R.}\ \bibnamefont {Jansen}}, \bibinfo
  {author} {\bibfnamefont {T.~D.}\ \bibnamefont {Morris}}, \bibinfo {author}
  {\bibfnamefont {T.}~\bibnamefont {Papenbrock}}, \bibinfo {author}
  {\bibfnamefont {R.~C.}\ \bibnamefont {Pooser}}, \bibinfo {author}
  {\bibfnamefont {D.~J.}\ \bibnamefont {Dean}}, \ and\ \bibinfo {author}
  {\bibfnamefont {P.}~\bibnamefont {Lougovski}},\ }\href@noop {} {\bibfield
  {journal} {\bibinfo  {journal} {Phys. Rev. Lett.}\ }\textbf {\bibinfo
  {volume} {120}},\ \bibinfo {pages} {210501} (\bibinfo {year}
  {2018})}\BibitemShut {NoStop}%
\bibitem [{\citenamefont {Babbush}\ \emph
  {et~al.}(2018{\natexlab{b}})\citenamefont {Babbush}, \citenamefont {Wiebe},
  \citenamefont {McClean}, \citenamefont {McClain}, \citenamefont {Neven},\
  and\ \citenamefont {Chan}}]{babbush2018low}%
  \BibitemOpen
  \bibfield  {author} {\bibinfo {author} {\bibfnamefont {R.}~\bibnamefont
  {Babbush}}, \bibinfo {author} {\bibfnamefont {N.}~\bibnamefont {Wiebe}},
  \bibinfo {author} {\bibfnamefont {J.}~\bibnamefont {McClean}}, \bibinfo
  {author} {\bibfnamefont {J.}~\bibnamefont {McClain}}, \bibinfo {author}
  {\bibfnamefont {H.}~\bibnamefont {Neven}}, \ and\ \bibinfo {author}
  {\bibfnamefont {G.~K.-L.}\ \bibnamefont {Chan}},\ }\href@noop {} {\bibfield
  {journal} {\bibinfo  {journal} {Physical Review X}\ }\textbf {\bibinfo
  {volume} {8}},\ \bibinfo {pages} {011044} (\bibinfo {year}
  {2018}{\natexlab{b}})}\BibitemShut {NoStop}%
\bibitem [{\citenamefont {Rubin}\ \emph {et~al.}(2018)\citenamefont {Rubin},
  \citenamefont {Babbush},\ and\ \citenamefont
  {McClean}}]{rubin2018application}%
  \BibitemOpen
  \bibfield  {author} {\bibinfo {author} {\bibfnamefont {N.~C.}\ \bibnamefont
  {Rubin}}, \bibinfo {author} {\bibfnamefont {R.}~\bibnamefont {Babbush}}, \
  and\ \bibinfo {author} {\bibfnamefont {J.}~\bibnamefont {McClean}},\
  }\href@noop {} {\bibfield  {journal} {\bibinfo  {journal} {New Journal of
  Physics}\ }\textbf {\bibinfo {volume} {20}},\ \bibinfo {pages} {053020}
  (\bibinfo {year} {2018})}\BibitemShut {NoStop}%
\bibitem [{\citenamefont {Wang}\ \emph {et~al.}(2019)\citenamefont {Wang},
  \citenamefont {Higgott},\ and\ \citenamefont
  {Brierley}}]{wangAcceleratedVariationalQuantum2019}%
  \BibitemOpen
  \bibfield  {author} {\bibinfo {author} {\bibfnamefont {D.}~\bibnamefont
  {Wang}}, \bibinfo {author} {\bibfnamefont {O.}~\bibnamefont {Higgott}}, \
  and\ \bibinfo {author} {\bibfnamefont {S.}~\bibnamefont {Brierley}},\ }\href
  {\doibase 10.1103/PhysRevLett.122.140504} {\bibfield  {journal} {\bibinfo
  {journal} {Physical Review Letters}\ }\textbf {\bibinfo {volume} {122}},\
  \bibinfo {pages} {140504} (\bibinfo {year} {2019})}\BibitemShut {NoStop}%
\bibitem [{\citenamefont {Verteletskyi}\ \emph {et~al.}(2019)\citenamefont
  {Verteletskyi}, \citenamefont {Yen},\ and\ \citenamefont
  {Izmaylov}}]{verteletskyi2019measurement}%
  \BibitemOpen
  \bibfield  {author} {\bibinfo {author} {\bibfnamefont {V.}~\bibnamefont
  {Verteletskyi}}, \bibinfo {author} {\bibfnamefont {T.-C.}\ \bibnamefont
  {Yen}}, \ and\ \bibinfo {author} {\bibfnamefont {A.~F.}\ \bibnamefont
  {Izmaylov}},\ }\href@noop {} {\bibfield  {journal} {\bibinfo  {journal}
  {arXiv preprint arXiv:1907.03358}\ } (\bibinfo {year} {2019})}\BibitemShut
  {NoStop}%
\bibitem [{\citenamefont {Jena}\ \emph {et~al.}(2019)\citenamefont {Jena},
  \citenamefont {Genin},\ and\ \citenamefont {Mosca}}]{jena2019pauli}%
  \BibitemOpen
  \bibfield  {author} {\bibinfo {author} {\bibfnamefont {A.}~\bibnamefont
  {Jena}}, \bibinfo {author} {\bibfnamefont {S.}~\bibnamefont {Genin}}, \ and\
  \bibinfo {author} {\bibfnamefont {M.}~\bibnamefont {Mosca}},\ }\href@noop {}
  {\bibfield  {journal} {\bibinfo  {journal} {arXiv preprint arXiv:1907.07859}\
  } (\bibinfo {year} {2019})}\BibitemShut {NoStop}%
\bibitem [{\citenamefont {Izmaylov}\ \emph {et~al.}(2020)\citenamefont
  {Izmaylov}, \citenamefont {Yen}, \citenamefont {Lang},\ and\ \citenamefont
  {Verteletskyi}}]{izmaylov19a}%
  \BibitemOpen
  \bibfield  {author} {\bibinfo {author} {\bibfnamefont {A.~F.}\ \bibnamefont
  {Izmaylov}}, \bibinfo {author} {\bibfnamefont {T.-C.}\ \bibnamefont {Yen}},
  \bibinfo {author} {\bibfnamefont {R.~A.}\ \bibnamefont {Lang}}, \ and\
  \bibinfo {author} {\bibfnamefont {V.}~\bibnamefont {Verteletskyi}},\ }\href
  {\doibase 10.1021/acs.jctc.9b00791} {\bibfield  {journal} {\bibinfo
  {journal} {Journal of Chemical Theory and Computation}\ }\textbf {\bibinfo
  {volume} {16}},\ \bibinfo {pages} {190} (\bibinfo {year} {2020})}\BibitemShut
  {NoStop}%
\bibitem [{\citenamefont {Yen}\ \emph {et~al.}(2019)\citenamefont {Yen},
  \citenamefont {Verteletsky},\ and\ \citenamefont
  {Izmaylov}}]{yen2019measuring}%
  \BibitemOpen
  \bibfield  {author} {\bibinfo {author} {\bibfnamefont {T.-C.}\ \bibnamefont
  {Yen}}, \bibinfo {author} {\bibfnamefont {V.}~\bibnamefont {Verteletsky}}, \
  and\ \bibinfo {author} {\bibfnamefont {A.~F.}\ \bibnamefont {Izmaylov}},\
  }\href@noop {} {\bibfield  {journal} {\bibinfo  {journal} {arXiv preprint
  arXiv:1907.09386}\ } (\bibinfo {year} {2019})}\BibitemShut {NoStop}%
\bibitem [{\citenamefont {Huggins}\ \emph {et~al.}(2019)\citenamefont
  {Huggins}, \citenamefont {McClean}, \citenamefont {Rubin}, \citenamefont
  {Jiang}, \citenamefont {Wiebe}, \citenamefont {Whaley},\ and\ \citenamefont
  {Babbush}}]{huggins2019efficient}%
  \BibitemOpen
  \bibfield  {author} {\bibinfo {author} {\bibfnamefont {W.~J.}\ \bibnamefont
  {Huggins}}, \bibinfo {author} {\bibfnamefont {J.}~\bibnamefont {McClean}},
  \bibinfo {author} {\bibfnamefont {N.}~\bibnamefont {Rubin}}, \bibinfo
  {author} {\bibfnamefont {Z.}~\bibnamefont {Jiang}}, \bibinfo {author}
  {\bibfnamefont {N.}~\bibnamefont {Wiebe}}, \bibinfo {author} {\bibfnamefont
  {K.~B.}\ \bibnamefont {Whaley}}, \ and\ \bibinfo {author} {\bibfnamefont
  {R.}~\bibnamefont {Babbush}},\ }\href@noop {} {\bibfield  {journal} {\bibinfo
   {journal} {arXiv preprint arXiv:1907.13117}\ } (\bibinfo {year}
  {2019})}\BibitemShut {NoStop}%
\bibitem [{\citenamefont {Gokhale}\ \emph {et~al.}(2019)\citenamefont
  {Gokhale}, \citenamefont {Angiuli}, \citenamefont {Ding}, \citenamefont
  {Gui}, \citenamefont {Tomesh}, \citenamefont {Suchara}, \citenamefont
  {Martonosi},\ and\ \citenamefont {Chong}}]{gokhale2019MinimizingStatePrep}%
  \BibitemOpen
  \bibfield  {author} {\bibinfo {author} {\bibfnamefont {P.}~\bibnamefont
  {Gokhale}}, \bibinfo {author} {\bibfnamefont {O.}~\bibnamefont {Angiuli}},
  \bibinfo {author} {\bibfnamefont {Y.}~\bibnamefont {Ding}}, \bibinfo {author}
  {\bibfnamefont {K.}~\bibnamefont {Gui}}, \bibinfo {author} {\bibfnamefont
  {T.}~\bibnamefont {Tomesh}}, \bibinfo {author} {\bibfnamefont
  {M.}~\bibnamefont {Suchara}}, \bibinfo {author} {\bibfnamefont
  {M.}~\bibnamefont {Martonosi}}, \ and\ \bibinfo {author} {\bibfnamefont
  {F.~T.}\ \bibnamefont {Chong}},\ }\href@noop {} {\bibfield  {journal}
  {\bibinfo  {journal} {arXiv preprint arXiv:1907.13623}\ } (\bibinfo {year}
  {2019})}\BibitemShut {NoStop}%
\bibitem [{\citenamefont {Bonet-Monroig}\ \emph {et~al.}(2019)\citenamefont
  {Bonet-Monroig}, \citenamefont {Babbush},\ and\ \citenamefont
  {O'Brien}}]{bonetNearlyOptimalMeas2019}%
  \BibitemOpen
  \bibfield  {author} {\bibinfo {author} {\bibfnamefont {X.}~\bibnamefont
  {Bonet-Monroig}}, \bibinfo {author} {\bibfnamefont {R.}~\bibnamefont
  {Babbush}}, \ and\ \bibinfo {author} {\bibfnamefont {T.~E.}\ \bibnamefont
  {O'Brien}},\ }\href@noop {} {\bibfield  {journal} {\bibinfo  {journal} {arXiv
  preprint arXiv:1908.05628}\ } (\bibinfo {year} {2019})}\BibitemShut {NoStop}%
\bibitem [{\citenamefont {Crawford}\ \emph {et~al.}(2019)\citenamefont
  {Crawford}, \citenamefont {van Straaten}, \citenamefont {Wang}, \citenamefont
  {Parks}, \citenamefont {Campbell},\ and\ \citenamefont
  {Brierley}}]{crawford2019efficient}%
  \BibitemOpen
  \bibfield  {author} {\bibinfo {author} {\bibfnamefont {O.}~\bibnamefont
  {Crawford}}, \bibinfo {author} {\bibfnamefont {B.}~\bibnamefont {van
  Straaten}}, \bibinfo {author} {\bibfnamefont {D.}~\bibnamefont {Wang}},
  \bibinfo {author} {\bibfnamefont {T.}~\bibnamefont {Parks}}, \bibinfo
  {author} {\bibfnamefont {E.}~\bibnamefont {Campbell}}, \ and\ \bibinfo
  {author} {\bibfnamefont {S.}~\bibnamefont {Brierley}},\ }\href@noop {}
  {\bibfield  {journal} {\bibinfo  {journal} {arXiv preprint arXiv:1908.06942}\
  } (\bibinfo {year} {2019})}\BibitemShut {NoStop}%
\bibitem [{\citenamefont {Gokhale}\ and\ \citenamefont
  {Chong}(2019)}]{gokhale2019n}%
  \BibitemOpen
  \bibfield  {author} {\bibinfo {author} {\bibfnamefont {P.}~\bibnamefont
  {Gokhale}}\ and\ \bibinfo {author} {\bibfnamefont {F.~T.}\ \bibnamefont
  {Chong}},\ }\href@noop {} {\bibfield  {journal} {\bibinfo  {journal} {arXiv
  preprint arXiv:1908.11857}\ } (\bibinfo {year} {2019})}\BibitemShut {NoStop}%
\bibitem [{\citenamefont {Torlai}\ \emph {et~al.}(2019)\citenamefont {Torlai},
  \citenamefont {Mazzola}, \citenamefont {Carleo},\ and\ \citenamefont
  {Mezzacapo}}]{torlai2019precise}%
  \BibitemOpen
  \bibfield  {author} {\bibinfo {author} {\bibfnamefont {G.}~\bibnamefont
  {Torlai}}, \bibinfo {author} {\bibfnamefont {G.}~\bibnamefont {Mazzola}},
  \bibinfo {author} {\bibfnamefont {G.}~\bibnamefont {Carleo}}, \ and\ \bibinfo
  {author} {\bibfnamefont {A.}~\bibnamefont {Mezzacapo}},\ }\href@noop {}
  {\bibfield  {journal} {\bibinfo  {journal} {arXiv preprint arXiv:1910.07596}\
  } (\bibinfo {year} {2019})}\BibitemShut {NoStop}%
\bibitem [{\citenamefont {Bravyi}\ and\ \citenamefont
  {Kitaev}(2002)}]{bravyi02a}%
  \BibitemOpen
  \bibfield  {author} {\bibinfo {author} {\bibfnamefont {S.~B.}\ \bibnamefont
  {Bravyi}}\ and\ \bibinfo {author} {\bibfnamefont {A.~Y.}\ \bibnamefont
  {Kitaev}},\ }\href {\doibase https://doi.org/10.1006/aphy.2002.6254}
  {\bibfield  {journal} {\bibinfo  {journal} {Annals of Physics}\ }\textbf
  {\bibinfo {volume} {298}},\ \bibinfo {pages} {210 } (\bibinfo {year}
  {2002})}\BibitemShut {NoStop}%
\bibitem [{\citenamefont {Seeley}\ \emph {et~al.}(2012)\citenamefont {Seeley},
  \citenamefont {Richard},\ and\ \citenamefont {Love}}]{love12}%
  \BibitemOpen
  \bibfield  {author} {\bibinfo {author} {\bibfnamefont {J.~T.}\ \bibnamefont
  {Seeley}}, \bibinfo {author} {\bibfnamefont {M.~J.}\ \bibnamefont {Richard}},
  \ and\ \bibinfo {author} {\bibfnamefont {P.~J.}\ \bibnamefont {Love}},\
  }\href@noop {} {\bibfield  {journal} {\bibinfo  {journal} {Journal of
  Chemical Physics}\ }\textbf {\bibinfo {volume} {137}},\ \bibinfo {pages}
  {224109} (\bibinfo {year} {2012})}\BibitemShut {NoStop}%
\bibitem [{\citenamefont {Somma}\ \emph {et~al.}(2002)\citenamefont {Somma},
  \citenamefont {Ortiz}, \citenamefont {Gubernatis}, \citenamefont {Knill},\
  and\ \citenamefont {Laflamme}}]{somma02a}%
  \BibitemOpen
  \bibfield  {author} {\bibinfo {author} {\bibfnamefont {R.}~\bibnamefont
  {Somma}}, \bibinfo {author} {\bibfnamefont {G.}~\bibnamefont {Ortiz}},
  \bibinfo {author} {\bibfnamefont {J.~E.}\ \bibnamefont {Gubernatis}},
  \bibinfo {author} {\bibfnamefont {E.}~\bibnamefont {Knill}}, \ and\ \bibinfo
  {author} {\bibfnamefont {R.}~\bibnamefont {Laflamme}},\ }\href {\doibase
  10.1103/PhysRevA.65.042323} {\bibfield  {journal} {\bibinfo  {journal} {Phys.
  Rev. A}\ }\textbf {\bibinfo {volume} {65}},\ \bibinfo {pages} {042323}
  (\bibinfo {year} {2002})}\BibitemShut {NoStop}%
\bibitem [{\citenamefont {Nielsen}\ and\ \citenamefont
  {Chuang}(2002)}]{nielsen2002quantum}%
  \BibitemOpen
  \bibfield  {author} {\bibinfo {author} {\bibfnamefont {M.~A.}\ \bibnamefont
  {Nielsen}}\ and\ \bibinfo {author} {\bibfnamefont {I.}~\bibnamefont
  {Chuang}},\ }\href@noop {} {\enquote {\bibinfo {title} {Quantum computation
  and quantum information},}\ } (\bibinfo {year} {2002})\BibitemShut {NoStop}%
\bibitem [{\citenamefont {Kocia}\ and\ \citenamefont {Love}(2017)}]{love17a}%
  \BibitemOpen
  \bibfield  {author} {\bibinfo {author} {\bibfnamefont {L.}~\bibnamefont
  {Kocia}}\ and\ \bibinfo {author} {\bibfnamefont {P.}~\bibnamefont {Love}},\
  }\href@noop {} {\bibfield  {journal} {\bibinfo  {journal} {Phys. Rev. A}\
  }\textbf {\bibinfo {volume} {96}},\ \bibinfo {pages} {062134} (\bibinfo
  {year} {2017})}\BibitemShut {NoStop}%
\bibitem [{\citenamefont {Childs}\ and\ \citenamefont {Wiebe}(2012)}]{LCU2012}%
  \BibitemOpen
  \bibfield  {author} {\bibinfo {author} {\bibfnamefont {A.~M.}\ \bibnamefont
  {Childs}}\ and\ \bibinfo {author} {\bibfnamefont {N.}~\bibnamefont {Wiebe}},\
  }\href {http://dl.acm.org/citation.cfm?id=2481569.2481570} {\bibfield
  {journal} {\bibinfo  {journal} {Quantum Info. Comput.}\ }\textbf {\bibinfo
  {volume} {12}},\ \bibinfo {pages} {901} (\bibinfo {year} {2012})}\BibitemShut
  {NoStop}%
\bibitem [{\citenamefont {Babbush}\ \emph {et~al.}(2019)\citenamefont
  {Babbush}, \citenamefont {Berry},\ and\ \citenamefont
  {Neven}}]{babbush2019SYK}%
  \BibitemOpen
  \bibfield  {author} {\bibinfo {author} {\bibfnamefont {R.}~\bibnamefont
  {Babbush}}, \bibinfo {author} {\bibfnamefont {D.~W.}\ \bibnamefont {Berry}},
  \ and\ \bibinfo {author} {\bibfnamefont {H.}~\bibnamefont {Neven}},\
  }\href@noop {} {\bibfield  {journal} {\bibinfo  {journal} {Physical Review
  A}\ }\textbf {\bibinfo {volume} {99}},\ \bibinfo {pages} {040301} (\bibinfo
  {year} {2019})}\BibitemShut {NoStop}%
\bibitem [{\citenamefont {Kirby}\ and\ \citenamefont {Love}(2019)}]{kirby19a}%
  \BibitemOpen
  \bibfield  {author} {\bibinfo {author} {\bibfnamefont {W.~M.}\ \bibnamefont
  {Kirby}}\ and\ \bibinfo {author} {\bibfnamefont {P.~J.}\ \bibnamefont
  {Love}},\ }\href {\doibase 10.1103/PhysRevLett.123.200501} {\bibfield
  {journal} {\bibinfo  {journal} {Phys. Rev. Lett.}\ }\textbf {\bibinfo
  {volume} {123}},\ \bibinfo {pages} {200501} (\bibinfo {year}
  {2019})}\BibitemShut {NoStop}%
\bibitem [{\citenamefont {Raussendorf}\ \emph {et~al.}(2019)\citenamefont
  {Raussendorf}, \citenamefont {Bermejo-Vega}, \citenamefont {Tyhurst},
  \citenamefont {Okay},\ and\ \citenamefont {Zurel}}]{raussendorf2019phase}%
  \BibitemOpen
  \bibfield  {author} {\bibinfo {author} {\bibfnamefont {R.}~\bibnamefont
  {Raussendorf}}, \bibinfo {author} {\bibfnamefont {J.}~\bibnamefont
  {Bermejo-Vega}}, \bibinfo {author} {\bibfnamefont {E.}~\bibnamefont
  {Tyhurst}}, \bibinfo {author} {\bibfnamefont {C.}~\bibnamefont {Okay}}, \
  and\ \bibinfo {author} {\bibfnamefont {M.}~\bibnamefont {Zurel}},\
  }\href@noop {} {\bibfield  {journal} {\bibinfo  {journal} {arXiv preprint
  arXiv:1905.05374}\ } (\bibinfo {year} {2019})}\BibitemShut {NoStop}%
\bibitem [{\citenamefont {Low}\ and\ \citenamefont {Chuang}(2019)}]{low16a}%
  \BibitemOpen
  \bibfield  {author} {\bibinfo {author} {\bibfnamefont {G.~H.}\ \bibnamefont
  {Low}}\ and\ \bibinfo {author} {\bibfnamefont {I.~L.}\ \bibnamefont
  {Chuang}},\ }\href {\doibase 10.22331/q-2019-07-12-163} {\bibfield  {journal}
  {\bibinfo  {journal} {{Quantum}}\ }\textbf {\bibinfo {volume} {3}},\ \bibinfo
  {pages} {163} (\bibinfo {year} {2019})}\BibitemShut {NoStop}%
\bibitem [{\citenamefont {Low}\ and\ \citenamefont {Chuang}(2017)}]{low17}%
  \BibitemOpen
  \bibfield  {author} {\bibinfo {author} {\bibfnamefont {G.~H.}\ \bibnamefont
  {Low}}\ and\ \bibinfo {author} {\bibfnamefont {I.~L.}\ \bibnamefont
  {Chuang}},\ }\href {\doibase 10.1103/PhysRevLett.118.010501} {\bibfield
  {journal} {\bibinfo  {journal} {Phys. Rev. Lett.}\ }\textbf {\bibinfo
  {volume} {118}},\ \bibinfo {pages} {010501} (\bibinfo {year}
  {2017})}\BibitemShut {NoStop}%
\bibitem [{\citenamefont {Poulin}\ \emph {et~al.}(2018)\citenamefont {Poulin},
  \citenamefont {Kitaev}, \citenamefont {Steiger}, \citenamefont {Hastings},\
  and\ \citenamefont {Troyer}}]{poulin18a}%
  \BibitemOpen
  \bibfield  {author} {\bibinfo {author} {\bibfnamefont {D.}~\bibnamefont
  {Poulin}}, \bibinfo {author} {\bibfnamefont {A.}~\bibnamefont {Kitaev}},
  \bibinfo {author} {\bibfnamefont {D.~S.}\ \bibnamefont {Steiger}}, \bibinfo
  {author} {\bibfnamefont {M.~B.}\ \bibnamefont {Hastings}}, \ and\ \bibinfo
  {author} {\bibfnamefont {M.}~\bibnamefont {Troyer}},\ }\href {\doibase
  10.1103/PhysRevLett.121.010501} {\bibfield  {journal} {\bibinfo  {journal}
  {Phys. Rev. Lett.}\ }\textbf {\bibinfo {volume} {121}},\ \bibinfo {pages}
  {010501} (\bibinfo {year} {2018})}\BibitemShut {NoStop}%
\bibitem [{\citenamefont {Wecker}\ \emph {et~al.}(2015)\citenamefont {Wecker},
  \citenamefont {Hastings},\ and\ \citenamefont {Troyer}}]{PhysRevA.92.042303}%
  \BibitemOpen
  \bibfield  {author} {\bibinfo {author} {\bibfnamefont {D.}~\bibnamefont
  {Wecker}}, \bibinfo {author} {\bibfnamefont {M.~B.}\ \bibnamefont
  {Hastings}}, \ and\ \bibinfo {author} {\bibfnamefont {M.}~\bibnamefont
  {Troyer}},\ }\href {\doibase 10.1103/PhysRevA.92.042303} {\bibfield
  {journal} {\bibinfo  {journal} {Phys. Rev. A}\ }\textbf {\bibinfo {volume}
  {92}},\ \bibinfo {pages} {042303} (\bibinfo {year} {2015})}\BibitemShut
  {NoStop}%
\bibitem [{\citenamefont {Planat}\ and\ \citenamefont
  {Saniga}(2008)}]{planat2008pauli}%
  \BibitemOpen
  \bibfield  {author} {\bibinfo {author} {\bibfnamefont {M.}~\bibnamefont
  {Planat}}\ and\ \bibinfo {author} {\bibfnamefont {M.}~\bibnamefont
  {Saniga}},\ }\href@noop {} {\bibfield  {journal} {\bibinfo  {journal}
  {Quantum Information and Computation}\ }\textbf {\bibinfo {volume} {8}},\
  \bibinfo {pages} {0127} (\bibinfo {year} {2008})}\BibitemShut {NoStop}%
\bibitem [{\citenamefont {Bollob{\'a}s}(1988)}]{bollobas1988chromatic}%
  \BibitemOpen
  \bibfield  {author} {\bibinfo {author} {\bibfnamefont {B.}~\bibnamefont
  {Bollob{\'a}s}},\ }\href@noop {} {\bibfield  {journal} {\bibinfo  {journal}
  {Combinatorica}\ }\textbf {\bibinfo {volume} {8}},\ \bibinfo {pages} {49}
  (\bibinfo {year} {1988})}\BibitemShut {NoStop}%
\bibitem [{\citenamefont {Erd\H{o}s}\ and\ \citenamefont
  {R{\'e}nyi}(1960)}]{erds1960evolution}%
  \BibitemOpen
  \bibfield  {author} {\bibinfo {author} {\bibfnamefont {P.}~\bibnamefont
  {Erd\H{o}s}}\ and\ \bibinfo {author} {\bibfnamefont {A.}~\bibnamefont
  {R{\'e}nyi}},\ }\href@noop {} {\bibfield  {journal} {\bibinfo  {journal}
  {Publ. Math. Inst. Hungar. Acad. Sci}\ }\textbf {\bibinfo {volume} {5}},\
  \bibinfo {pages} {17} (\bibinfo {year} {1960})}\BibitemShut {NoStop}%
\bibitem [{\citenamefont {Olson}\ \emph {et~al.}(2017)\citenamefont {Olson},
  \citenamefont {Cao}, \citenamefont {Romero}, \citenamefont {Johnson},
  \citenamefont {Dallaire-Demers}, \citenamefont {Sawaya}, \citenamefont
  {Narang}, \citenamefont {Kivlichan}, \citenamefont {Wasielewski},\ and\
  \citenamefont {Aspuru-Guzik}}]{olsonQuantumInformationComputation2017}%
  \BibitemOpen
  \bibfield  {author} {\bibinfo {author} {\bibfnamefont {J.}~\bibnamefont
  {Olson}}, \bibinfo {author} {\bibfnamefont {Y.}~\bibnamefont {Cao}}, \bibinfo
  {author} {\bibfnamefont {J.}~\bibnamefont {Romero}}, \bibinfo {author}
  {\bibfnamefont {P.}~\bibnamefont {Johnson}}, \bibinfo {author} {\bibfnamefont
  {P.-L.}\ \bibnamefont {Dallaire-Demers}}, \bibinfo {author} {\bibfnamefont
  {N.}~\bibnamefont {Sawaya}}, \bibinfo {author} {\bibfnamefont
  {P.}~\bibnamefont {Narang}}, \bibinfo {author} {\bibfnamefont
  {I.}~\bibnamefont {Kivlichan}}, \bibinfo {author} {\bibfnamefont
  {M.}~\bibnamefont {Wasielewski}}, \ and\ \bibinfo {author} {\bibfnamefont
  {A.}~\bibnamefont {Aspuru-Guzik}},\ }\href@noop {} {\bibfield  {journal}
  {\bibinfo  {journal} {arXiv preprint arXiv:1706.05413}\ } (\bibinfo {year}
  {2017})}\BibitemShut {NoStop}%
\bibitem [{\citenamefont {Jordan}\ and\ \citenamefont
  {Wigner}(1928)}]{jordan28a}%
  \BibitemOpen
  \bibfield  {author} {\bibinfo {author} {\bibfnamefont {P.}~\bibnamefont
  {Jordan}}\ and\ \bibinfo {author} {\bibfnamefont {E.}~\bibnamefont
  {Wigner}},\ }\href@noop {} {\bibfield  {journal} {\bibinfo  {journal} {Z.
  Phys.}\ }\textbf {\bibinfo {volume} {47}},\ \bibinfo {pages} {631} (\bibinfo
  {year} {1928})}\BibitemShut {NoStop}%
\bibitem [{\citenamefont {Tranter}\ \emph {et~al.}(2015)\citenamefont
  {Tranter}, \citenamefont {Sofia}, \citenamefont {Seeley}, \citenamefont
  {Kaicher}, \citenamefont {Mcclean}, \citenamefont {Babbush}, \citenamefont
  {Coveney}, \citenamefont {Mintert}, \citenamefont {Wilhelm},\ and\
  \citenamefont {Love}}]{tranter15a}%
  \BibitemOpen
  \bibfield  {author} {\bibinfo {author} {\bibfnamefont {A.}~\bibnamefont
  {Tranter}}, \bibinfo {author} {\bibfnamefont {S.}~\bibnamefont {Sofia}},
  \bibinfo {author} {\bibfnamefont {J.}~\bibnamefont {Seeley}}, \bibinfo
  {author} {\bibfnamefont {M.}~\bibnamefont {Kaicher}}, \bibinfo {author}
  {\bibfnamefont {J.}~\bibnamefont {Mcclean}}, \bibinfo {author} {\bibfnamefont
  {R.}~\bibnamefont {Babbush}}, \bibinfo {author} {\bibfnamefont
  {P.}~\bibnamefont {Coveney}}, \bibinfo {author} {\bibfnamefont
  {F.}~\bibnamefont {Mintert}}, \bibinfo {author} {\bibfnamefont
  {F.}~\bibnamefont {Wilhelm}}, \ and\ \bibinfo {author} {\bibfnamefont
  {P.}~\bibnamefont {Love}},\ }\href {\doibase 10.1002/qua.24969} {\bibfield
  {journal} {\bibinfo  {journal} {International Journal of Quantum Chemistry}\
  }\textbf {\bibinfo {volume} {115}} (\bibinfo {year} {2015}),\
  10.1002/qua.24969}\BibitemShut {NoStop}%
\bibitem [{\citenamefont {Setia}\ and\ \citenamefont
  {Whitfield}(2018)}]{setia17a}%
  \BibitemOpen
  \bibfield  {author} {\bibinfo {author} {\bibfnamefont {K.}~\bibnamefont
  {Setia}}\ and\ \bibinfo {author} {\bibfnamefont {J.~D.}\ \bibnamefont
  {Whitfield}},\ }\href@noop {} {\bibfield  {journal} {\bibinfo  {journal} {The
  Journal of Chemical Physics}\ }\textbf {\bibinfo {volume} {148}},\ \bibinfo
  {pages} {164104} (\bibinfo {year} {2018})}\BibitemShut {NoStop}%
\bibitem [{\citenamefont {McClean}\ \emph {et~al.}(2016)\citenamefont
  {McClean}, \citenamefont {Romero}, \citenamefont {Babbush},\ and\
  \citenamefont {{Aspuru-Guzik}}}]{mccleanTheoryVariationalHybrid2016}%
  \BibitemOpen
  \bibfield  {author} {\bibinfo {author} {\bibfnamefont {J.~R.}\ \bibnamefont
  {McClean}}, \bibinfo {author} {\bibfnamefont {J.}~\bibnamefont {Romero}},
  \bibinfo {author} {\bibfnamefont {R.}~\bibnamefont {Babbush}}, \ and\
  \bibinfo {author} {\bibfnamefont {A.}~\bibnamefont {{Aspuru-Guzik}}},\ }\href
  {\doibase 10.1088/1367-2630/18/2/023023} {\bibfield  {journal} {\bibinfo
  {journal} {New Journal of Physics}\ }\textbf {\bibinfo {volume} {18}},\
  \bibinfo {pages} {023023} (\bibinfo {year} {2016})}\BibitemShut {NoStop}%
\bibitem [{\citenamefont {Romero}\ \emph {et~al.}(2018)\citenamefont {Romero},
  \citenamefont {Babbush}, \citenamefont {McClean}, \citenamefont {Hempel},
  \citenamefont {Love},\ and\ \citenamefont
  {{Aspuru-Guzik}}}]{romeroStrategiesQuantumComputing2018}%
  \BibitemOpen
  \bibfield  {author} {\bibinfo {author} {\bibfnamefont {J.}~\bibnamefont
  {Romero}}, \bibinfo {author} {\bibfnamefont {R.}~\bibnamefont {Babbush}},
  \bibinfo {author} {\bibfnamefont {J.~R.}\ \bibnamefont {McClean}}, \bibinfo
  {author} {\bibfnamefont {C.}~\bibnamefont {Hempel}}, \bibinfo {author}
  {\bibfnamefont {P.~J.}\ \bibnamefont {Love}}, \ and\ \bibinfo {author}
  {\bibfnamefont {A.}~\bibnamefont {{Aspuru-Guzik}}},\ }\href {\doibase
  10.1088/2058-9565/aad3e4} {\bibfield  {journal} {\bibinfo  {journal} {Quantum
  Science and Technology}\ }\textbf {\bibinfo {volume} {4}},\ \bibinfo {pages}
  {014008} (\bibinfo {year} {2018})}\BibitemShut {NoStop}%
\bibitem [{\citenamefont {Lee}\ \emph {et~al.}(2019)\citenamefont {Lee},
  \citenamefont {Huggins}, \citenamefont {{Head-Gordon}},\ and\ \citenamefont
  {Whaley}}]{leeGeneralizedUnitaryCoupled2019}%
  \BibitemOpen
  \bibfield  {author} {\bibinfo {author} {\bibfnamefont {J.}~\bibnamefont
  {Lee}}, \bibinfo {author} {\bibfnamefont {W.~J.}\ \bibnamefont {Huggins}},
  \bibinfo {author} {\bibfnamefont {M.}~\bibnamefont {{Head-Gordon}}}, \ and\
  \bibinfo {author} {\bibfnamefont {K.~B.}\ \bibnamefont {Whaley}},\ }\href
  {\doibase 10.1021/acs.jctc.8b01004} {\bibfield  {journal} {\bibinfo
  {journal} {Journal of Chemical Theory and Computation}\ }\textbf {\bibinfo
  {volume} {15}},\ \bibinfo {pages} {311} (\bibinfo {year} {2019})}\BibitemShut
  {NoStop}%
\bibitem [{\citenamefont {Garey}\ \emph {et~al.}(1974)\citenamefont {Garey},
  \citenamefont {Johnson},\ and\ \citenamefont
  {Stockmeyer}}]{gareySimplifiedNPcompleteProblems1974a}%
  \BibitemOpen
  \bibfield  {author} {\bibinfo {author} {\bibfnamefont {M.~R.}\ \bibnamefont
  {Garey}}, \bibinfo {author} {\bibfnamefont {D.~S.}\ \bibnamefont {Johnson}},
  \ and\ \bibinfo {author} {\bibfnamefont {L.}~\bibnamefont {Stockmeyer}},\
  }in\ \href {\doibase 10.1145/800119.803884} {\emph {\bibinfo {booktitle}
  {Proceedings of the {{Sixth Annual ACM Symposium}} on {{Theory}} of
  {{Computing}}}}},\ \bibinfo {series and number} {{{STOC}} '74}\ (\bibinfo
  {publisher} {{ACM}},\ \bibinfo {address} {{New York, NY, USA}},\ \bibinfo
  {year} {1974})\ pp.\ \bibinfo {pages} {47--63}\BibitemShut {NoStop}%
\bibitem [{\citenamefont {Kosowski}\ and\ \citenamefont
  {Manuszewski}(2004)}]{kosowskiClassicalColoringGraphs2004}%
  \BibitemOpen
  \bibfield  {author} {\bibinfo {author} {\bibfnamefont {A.}~\bibnamefont
  {Kosowski}}\ and\ \bibinfo {author} {\bibfnamefont {K.}~\bibnamefont
  {Manuszewski}},\ }in\ \href {\doibase 10.1090/conm/352/06369} {\emph
  {\bibinfo {booktitle} {Contemporary {{Mathematics}}}}},\ Vol.\ \bibinfo
  {volume} {352},\ \bibinfo {editor} {edited by\ \bibinfo {editor}
  {\bibfnamefont {M.}~\bibnamefont {Kubale}}}\ (\bibinfo  {publisher}
  {{American Mathematical Society}},\ \bibinfo {address} {{Providence, Rhode
  Island}},\ \bibinfo {year} {2004})\ pp.\ \bibinfo {pages} {1--19}\BibitemShut
  {NoStop}%
\bibitem [{\citenamefont {Tranter}\ \emph {et~al.}(2018)\citenamefont
  {Tranter}, \citenamefont {Love}, \citenamefont {Mintert},\ and\ \citenamefont
  {Coveney}}]{tranterComparisonBravyiKitaev2018}%
  \BibitemOpen
  \bibfield  {author} {\bibinfo {author} {\bibfnamefont {A.}~\bibnamefont
  {Tranter}}, \bibinfo {author} {\bibfnamefont {P.~J.}\ \bibnamefont {Love}},
  \bibinfo {author} {\bibfnamefont {F.}~\bibnamefont {Mintert}}, \ and\
  \bibinfo {author} {\bibfnamefont {P.~V.}\ \bibnamefont {Coveney}},\ }\href
  {\doibase 10.1021/acs.jctc.8b00450} {\bibfield  {journal} {\bibinfo
  {journal} {Journal of Chemical Theory and Computation}\ }\textbf {\bibinfo
  {volume} {14}},\ \bibinfo {pages} {5617} (\bibinfo {year}
  {2018})}\BibitemShut {NoStop}%
\bibitem [{\citenamefont {Tranter}\ \emph {et~al.}(2019)\citenamefont
  {Tranter}, \citenamefont {Love}, \citenamefont {Mintert}, \citenamefont
  {Wiebe},\ and\ \citenamefont
  {Coveney}}]{tranterOrderingTrotterizationImpact2019}%
  \BibitemOpen
  \bibfield  {author} {\bibinfo {author} {\bibfnamefont {A.}~\bibnamefont
  {Tranter}}, \bibinfo {author} {\bibfnamefont {P.~J.}\ \bibnamefont {Love}},
  \bibinfo {author} {\bibfnamefont {F.}~\bibnamefont {Mintert}}, \bibinfo
  {author} {\bibfnamefont {N.}~\bibnamefont {Wiebe}}, \ and\ \bibinfo {author}
  {\bibfnamefont {P.~V.}\ \bibnamefont {Coveney}},\ }\href {\doibase
  10.3390/e21121218} {\bibfield  {journal} {\bibinfo  {journal} {Entropy}\
  }\textbf {\bibinfo {volume} {21}},\ \bibinfo {pages} {1218} (\bibinfo {year}
  {2019})}\BibitemShut {NoStop}%
\bibitem [{\citenamefont
  {Johnson~III}(2016)}]{johnsoniiiNISTComputationalChemistry2016}%
  \BibitemOpen
  \bibfield  {author} {\bibinfo {author} {\bibfnamefont {R.~D.}\ \bibnamefont
  {Johnson~III}},\ }\href@noop {} {\emph {\bibinfo {title} {{{NIST
  Computational Chemistry Comparison}} and {{Benchmark Database NIST Standard
  Reference Database Number}} 101 {{Release}} 18}}}\ (\bibinfo {year}
  {2016})\BibitemShut {NoStop}%
\bibitem [{\citenamefont {Parrish}\ \emph {et~al.}(2017)\citenamefont
  {Parrish}, \citenamefont {Burns}, \citenamefont {Smith}, \citenamefont
  {Simmonett}, \citenamefont {DePrince}, \citenamefont {Hohenstein},
  \citenamefont {Bozkaya}, \citenamefont {Sokolov}, \citenamefont {Di~Remigio},
  \citenamefont {Richard}, \citenamefont {Gonthier}, \citenamefont {James},
  \citenamefont {McAlexander}, \citenamefont {Kumar}, \citenamefont {Saitow},
  \citenamefont {Wang}, \citenamefont {Pritchard}, \citenamefont {Verma},
  \citenamefont {Schaefer}, \citenamefont {Patkowski}, \citenamefont {King},
  \citenamefont {Valeev}, \citenamefont {Evangelista}, \citenamefont {Turney},
  \citenamefont {Crawford},\ and\ \citenamefont
  {Sherrill}}]{parrishPsi4OpenSourceElectronic2017}%
  \BibitemOpen
  \bibfield  {author} {\bibinfo {author} {\bibfnamefont {R.~M.}\ \bibnamefont
  {Parrish}}, \bibinfo {author} {\bibfnamefont {L.~A.}\ \bibnamefont {Burns}},
  \bibinfo {author} {\bibfnamefont {D.~G.~A.}\ \bibnamefont {Smith}}, \bibinfo
  {author} {\bibfnamefont {A.~C.}\ \bibnamefont {Simmonett}}, \bibinfo {author}
  {\bibfnamefont {A.~E.}\ \bibnamefont {DePrince}}, \bibinfo {author}
  {\bibfnamefont {E.~G.}\ \bibnamefont {Hohenstein}}, \bibinfo {author}
  {\bibfnamefont {U.}~\bibnamefont {Bozkaya}}, \bibinfo {author} {\bibfnamefont
  {A.~Y.}\ \bibnamefont {Sokolov}}, \bibinfo {author} {\bibfnamefont
  {R.}~\bibnamefont {Di~Remigio}}, \bibinfo {author} {\bibfnamefont {R.~M.}\
  \bibnamefont {Richard}}, \bibinfo {author} {\bibfnamefont {J.~F.}\
  \bibnamefont {Gonthier}}, \bibinfo {author} {\bibfnamefont {A.~M.}\
  \bibnamefont {James}}, \bibinfo {author} {\bibfnamefont {H.~R.}\ \bibnamefont
  {McAlexander}}, \bibinfo {author} {\bibfnamefont {A.}~\bibnamefont {Kumar}},
  \bibinfo {author} {\bibfnamefont {M.}~\bibnamefont {Saitow}}, \bibinfo
  {author} {\bibfnamefont {X.}~\bibnamefont {Wang}}, \bibinfo {author}
  {\bibfnamefont {B.~P.}\ \bibnamefont {Pritchard}}, \bibinfo {author}
  {\bibfnamefont {P.}~\bibnamefont {Verma}}, \bibinfo {author} {\bibfnamefont
  {H.~F.}\ \bibnamefont {Schaefer}}, \bibinfo {author} {\bibfnamefont
  {K.}~\bibnamefont {Patkowski}}, \bibinfo {author} {\bibfnamefont {R.~A.}\
  \bibnamefont {King}}, \bibinfo {author} {\bibfnamefont {E.~F.}\ \bibnamefont
  {Valeev}}, \bibinfo {author} {\bibfnamefont {F.~A.}\ \bibnamefont
  {Evangelista}}, \bibinfo {author} {\bibfnamefont {J.~M.}\ \bibnamefont
  {Turney}}, \bibinfo {author} {\bibfnamefont {T.~D.}\ \bibnamefont
  {Crawford}}, \ and\ \bibinfo {author} {\bibfnamefont {C.~D.}\ \bibnamefont
  {Sherrill}},\ }\href {\doibase 10.1021/acs.jctc.7b00174} {\bibfield
  {journal} {\bibinfo  {journal} {Journal of Chemical Theory and Computation}\
  }\textbf {\bibinfo {volume} {13}},\ \bibinfo {pages} {3185} (\bibinfo {year}
  {2017})}\BibitemShut {NoStop}%
\bibitem [{\citenamefont {McClean}\ \emph {et~al.}(2017)\citenamefont
  {McClean}, \citenamefont {Kivlichan}, \citenamefont {Sung}, \citenamefont
  {Steiger}, \citenamefont {Cao}, \citenamefont {Dai}, \citenamefont {Fried},
  \citenamefont {Gidney}, \citenamefont {Gimby}, \citenamefont {H{\"a}ner},
  \citenamefont {Hardikar}, \citenamefont {Havl{\'i}{\v c}ek}, \citenamefont
  {Huang}, \citenamefont {Jiang}, \citenamefont {Neeley}, \citenamefont
  {O'Brien}, \citenamefont {Ozfidan}, \citenamefont {Radin}, \citenamefont
  {Romero}, \citenamefont {Rubin}, \citenamefont {Sawaya}, \citenamefont
  {Setia}, \citenamefont {Sim}, \citenamefont {Steudtner}, \citenamefont {Sun},
  \citenamefont {Zhang},\ and\ \citenamefont
  {Babbush}}]{mccleanOpenFermionElectronicStructure2017}%
  \BibitemOpen
  \bibfield  {author} {\bibinfo {author} {\bibfnamefont {J.~R.}\ \bibnamefont
  {McClean}}, \bibinfo {author} {\bibfnamefont {I.~D.}\ \bibnamefont
  {Kivlichan}}, \bibinfo {author} {\bibfnamefont {K.~J.}\ \bibnamefont {Sung}},
  \bibinfo {author} {\bibfnamefont {D.~S.}\ \bibnamefont {Steiger}}, \bibinfo
  {author} {\bibfnamefont {Y.}~\bibnamefont {Cao}}, \bibinfo {author}
  {\bibfnamefont {C.}~\bibnamefont {Dai}}, \bibinfo {author} {\bibfnamefont
  {E.~S.}\ \bibnamefont {Fried}}, \bibinfo {author} {\bibfnamefont
  {C.}~\bibnamefont {Gidney}}, \bibinfo {author} {\bibfnamefont
  {B.}~\bibnamefont {Gimby}}, \bibinfo {author} {\bibfnamefont
  {T.}~\bibnamefont {H{\"a}ner}}, \bibinfo {author} {\bibfnamefont
  {T.}~\bibnamefont {Hardikar}}, \bibinfo {author} {\bibfnamefont
  {V.}~\bibnamefont {Havl{\'i}{\v c}ek}}, \bibinfo {author} {\bibfnamefont
  {C.}~\bibnamefont {Huang}}, \bibinfo {author} {\bibfnamefont
  {Z.}~\bibnamefont {Jiang}}, \bibinfo {author} {\bibfnamefont
  {M.}~\bibnamefont {Neeley}}, \bibinfo {author} {\bibfnamefont
  {T.}~\bibnamefont {O'Brien}}, \bibinfo {author} {\bibfnamefont
  {I.}~\bibnamefont {Ozfidan}}, \bibinfo {author} {\bibfnamefont {M.~D.}\
  \bibnamefont {Radin}}, \bibinfo {author} {\bibfnamefont {J.}~\bibnamefont
  {Romero}}, \bibinfo {author} {\bibfnamefont {N.}~\bibnamefont {Rubin}},
  \bibinfo {author} {\bibfnamefont {N.~P.~D.}\ \bibnamefont {Sawaya}}, \bibinfo
  {author} {\bibfnamefont {K.}~\bibnamefont {Setia}}, \bibinfo {author}
  {\bibfnamefont {S.}~\bibnamefont {Sim}}, \bibinfo {author} {\bibfnamefont
  {M.}~\bibnamefont {Steudtner}}, \bibinfo {author} {\bibfnamefont
  {W.}~\bibnamefont {Sun}}, \bibinfo {author} {\bibfnamefont {F.}~\bibnamefont
  {Zhang}}, \ and\ \bibinfo {author} {\bibfnamefont {R.}~\bibnamefont
  {Babbush}},\ }\href@noop {} {\bibfield  {journal} {\bibinfo  {journal}
  {arXiv:1710.07629 [physics, physics:quant-ph]}\ } (\bibinfo {year} {2017})},\
  \Eprint {http://arxiv.org/abs/1710.07629} {arXiv:1710.07629} \BibitemShut
  {NoStop}%
\bibitem [{\citenamefont {Hagberg}\ \emph {et~al.}(2008)\citenamefont
  {Hagberg}, \citenamefont {Schult},\ and\ \citenamefont
  {Swart}}]{SciPyProceedings_11}%
  \BibitemOpen
  \bibfield  {author} {\bibinfo {author} {\bibfnamefont {A.~A.}\ \bibnamefont
  {Hagberg}}, \bibinfo {author} {\bibfnamefont {D.~A.}\ \bibnamefont {Schult}},
  \ and\ \bibinfo {author} {\bibfnamefont {P.~J.}\ \bibnamefont {Swart}},\ }in\
  \href@noop {} {\emph {\bibinfo {booktitle} {Proceedings of the 7th Python in
  Science Conference}}},\ \bibinfo {editor} {edited by\ \bibinfo {editor}
  {\bibfnamefont {G.}~\bibnamefont {Varoquaux}}, \bibinfo {editor}
  {\bibfnamefont {T.}~\bibnamefont {Vaught}}, \ and\ \bibinfo {editor}
  {\bibfnamefont {J.}~\bibnamefont {Millman}}}\ (\bibinfo {address} {Pasadena,
  CA USA},\ \bibinfo {year} {2008})\ pp.\ \bibinfo {pages} {11 --
  15}\BibitemShut {NoStop}%
\bibitem [{\citenamefont {Whitfield}\ \emph {et~al.}(2011)\citenamefont
  {Whitfield}, \citenamefont {Biamonte},\ and\ \citenamefont
  {Aspuru-Guzik}}]{whitfieldSimulation2011}%
  \BibitemOpen
  \bibfield  {author} {\bibinfo {author} {\bibfnamefont {J.~D.}\ \bibnamefont
  {Whitfield}}, \bibinfo {author} {\bibfnamefont {J.}~\bibnamefont {Biamonte}},
  \ and\ \bibinfo {author} {\bibfnamefont {A.}~\bibnamefont {Aspuru-Guzik}},\
  }\href {\doibase 10.1080/00268976.2011.552441} {\bibfield  {journal}
  {\bibinfo  {journal} {Molecular Physics}\ }\textbf {\bibinfo {volume}
  {109}},\ \bibinfo {pages} {735} (\bibinfo {year} {2011})},\ \Eprint
  {http://arxiv.org/abs/https://doi.org/10.1080/00268976.2011.552441}
  {https://doi.org/10.1080/00268976.2011.552441} \BibitemShut {NoStop}%
\bibitem [{\citenamefont {Hastings}\ \emph {et~al.}(2015)\citenamefont
  {Hastings}, \citenamefont {Wecker}, \citenamefont {Bauer},\ and\
  \citenamefont {Troyer}}]{hastingsImprovingQuantumAlgorithms2015}%
  \BibitemOpen
  \bibfield  {author} {\bibinfo {author} {\bibfnamefont {M.~B.}\ \bibnamefont
  {Hastings}}, \bibinfo {author} {\bibfnamefont {D.}~\bibnamefont {Wecker}},
  \bibinfo {author} {\bibfnamefont {B.}~\bibnamefont {Bauer}}, \ and\ \bibinfo
  {author} {\bibfnamefont {M.}~\bibnamefont {Troyer}},\ }\href@noop {}
  {\bibfield  {journal} {\bibinfo  {journal} {Quantum Information \&
  Computation}\ }\textbf {\bibinfo {volume} {15}},\ \bibinfo {pages} {1}
  (\bibinfo {year} {2015})}\BibitemShut {NoStop}%
\bibitem [{\citenamefont {Ryabinkin}\ \emph {et~al.}(2018)\citenamefont
  {Ryabinkin}, \citenamefont {Yen}, \citenamefont {Genin},\ and\ \citenamefont
  {Izmaylov}}]{ryabinkin2018qubit}%
  \BibitemOpen
  \bibfield  {author} {\bibinfo {author} {\bibfnamefont {I.~G.}\ \bibnamefont
  {Ryabinkin}}, \bibinfo {author} {\bibfnamefont {T.-C.}\ \bibnamefont {Yen}},
  \bibinfo {author} {\bibfnamefont {S.~N.}\ \bibnamefont {Genin}}, \ and\
  \bibinfo {author} {\bibfnamefont {A.~F.}\ \bibnamefont {Izmaylov}},\
  }\href@noop {} {\bibfield  {journal} {\bibinfo  {journal} {Journal of
  chemical theory and computation}\ }\textbf {\bibinfo {volume} {14}},\
  \bibinfo {pages} {6317} (\bibinfo {year} {2018})}\BibitemShut {NoStop}%
\end{thebibliography}%

\appendix

\section{Calculational details}

In this section we give some derivations of the algebraic results used in the text.

\subsection{Computation of $ \mathcal{X}$ for the ALCU method}\label{app:comp_X}

We now derive the results that follow \cref{axis_b}. The operator $ \mathcal{X}$ is given by
\be
\begin{split}
 \mathcal{X} &= \frac{i}{2}\left[ H_{n-1}, P_n\right]\\
&=\frac{i}{2}\sum_{k=1}^{n-1}\beta_k\left[ P_k, P_n\right]\\
&=i\sum_{k=1}^{n-1}\beta_k P_k P_n,
\end{split}
\ee
where we wrote $ H_{n-1}=\sum_{k=1}^{n-1}\beta_k P_k$ with $\sum_{k=1}^{n-1}\beta_k^2=1$. Then we can compute:
\be
\begin{split}
 \mathcal{X}^2 &= -\sum_{k=1}^{n-1}\sum_{j=1}^{n-1}\beta_k\beta_j P_k P_n P_j P_n\\
&= -\sum_{k=1}^{n-1}\beta_k^2 P_k P_n P_k P_n-\sum_{k<j}^{n-1}\beta_k\beta_j\{ P_k P_n, P_j P_n\}\\
&= -\sum_{k=1}^{n-1}\beta_k^2 P_k P_n P_k P_n\\
&= \sum_{k=1}^{n-1}\beta_k^2 P_k^2 P_n^2\\
&= {\openone}.
\end{split}
\ee

Now consider the commutator of $ \mathcal{X}$ and $ H_n$. We can use $ \mathcal{X}=i H_{n-1} P_n$ to write
\begin{equation}
\begin{split}
 \mathcal{X}  H_n &= i H_{n-1} P_n  H_n    \\
&=i H_{n-1} P_n (\sin\phi_{n-1} H_{n-1}+\cos\phi_{n-1} P_n)\\
&=i(\sin\phi_{n-1} H_{n-1} P_n  H_{n-1}+\cos\phi_{n-1} H_{n-1} P_n^2).\\
\end{split}
\end{equation}
Using $\{ H_{n-1}, P_n\}=0$ and $ P_n^2= 1$ we have
\begin{equation}
\begin{split}
 \mathcal{X}  H_n &=i(-\sin\phi_{n-1} P_n +\cos\phi_{n-1} H_{n-1}),\\
\end{split}
\end{equation}
so that
\begin{equation}
[ \mathcal{X},  H_n] =2i(-\sin\phi_{n-1} P_n +\cos\phi_{n-1} H_{n-1}).
\end{equation}

This enables us to compute the adjoint action generated by $ \mathcal{X}$ on $H_n$. Using the identity [for any operators $ A$ and $B$, where $ A^2= {\openone}$ so that $e^{-i(\alpha/2)  A}=\cos(\alpha/2)  {\openone} -i\sin(\alpha/2)  A$]
\begin{equation}
\begin{split}
e^{-i(\alpha/2)  A} B e^{i(\alpha/2)  A} 
&= \cos^2(\alpha/2)  B +\sin^2(\alpha/2)  A B A\\ &+i\sin(\alpha/2)\cos(\alpha/2) [ A, B],\\
\end{split}
\end{equation}
we have ($ R = e^{-i(\alpha/2) \mathcal{X}} $)
\begin{equation}
\begin{split}
R H_{n} R^\dagger &= \cos^2(\alpha/2)  H_{n} +\sin^2(\alpha/2)  \mathcal{X} H_{n} \mathcal{X}\\ 
&+i\sin(\alpha/2)\cos(\alpha/2) [ \mathcal{X}, H_{n}]\\
&= (\cos^2\alpha/2 -\sin^2\alpha/2)  H_{n}\\ &+(i/2)2\sin(\alpha/2)\cos(\alpha/2) [ \mathcal{X}, H_{n}]\\
&= \cos\alpha   H_{n}\\ &-\sin\alpha(-\sin\phi_{n-1} P_n +\cos\phi_{n-1} H_{n-1})\\
&= \cos\alpha  (\cos\phi_{n-1} P_n +\sin\phi_{n-1} H_{n-1})\\ 
&-\sin\alpha(-\sin\phi_{n-1} P_n +\cos\phi_{n-1} H_{n-1})\\
&= (\cos\alpha\cos\phi_{n-1}+\sin\alpha\sin\phi_{n-1}) P_n\\
& +(\cos\alpha\sin\phi_{n-1}-\sin\alpha\cos\phi_{n-1}) H_{n-1}\\ 
&= \cos(\phi_{n-1}-\alpha) P_n\\
& +\sin(\phi_{n-1}-\alpha) H_{n-1}.\\ 
\end{split}
\end{equation}
Choosing $\alpha=\phi_{n-1}$ gives $ R  H_n R^\dagger =  P_n$. Given this role for $ R$, which is generated by $ \mathcal{X}$, we wish to know the commutation relations among the terms of $ \mathcal{X}$. Because $ \mathcal{X}=2i  P_n H_{n-1}$, the terms of $\mathcal{X}$ have the form $2i  P_n P_j$ for $j<n$. The commutation relations between any pair of terms are
\begin{equation}
\begin{split}
[ P_n P_j, P_n P_k] &=   P_n P_j P_n P_k- P_n P_k P_n P_j   \\
&=  -( P_n P_n P_j P_k- P_n P_n P_k P_j)   \\
&= -[ P_j, P_k]\\
&=2 P_k P_j.\\
\end{split}
\end{equation}

\subsection{Electronic structure Hamiltonian using Majorana operators}\label{sec:H_e_majorana_appendix}

Here we derive the form of the Hamiltonian given in Eq.~\eqref{eq:H_e_majorana}. Since the single-mode Majorana operators are linear combinations of the fermionic ladder operators, we have the identities
\begin{equation}\label{eq:fermion_from_majorana}
		a_p = \frac{\gamma_{2p} + i\gamma_{2p+1}}{2}, \ a_p^\dagger = \frac{ \gamma_{2p} - i\gamma_{2p+1} }{2}.
\end{equation}
Furthermore, recall the permutational symmetries in the coefficients, given by \cref{eq:hpq_sym,eq:hpqrs_sym_1,eq:hpqrs_sym_2}, and the anticommutation relation for arbitrary Majorana operators, Eq.~\eqref{eq:majorana_gen_AR}. These are the only properties we use, but they allow for considerable simplification to the structure of the Hamiltonian terms. For brevity, we shall make use of such properties freely and often without comment.

First, consider the one-body terms, which are quadratic in fermionic operators. Using Majorana operators, they become
\begin{equation}
\begin{split}
\sum_{p,q} h_{pq} a_p^\dagger a_q = \frac{1}{4} \sum_{p,q} &h_{pq} ( \gamma_{2p}\gamma_{2q} + \gamma_{2p+1}\gamma_{2q+1} \\
&+ i\gamma_{2p}\gamma_{2q+1} - i\gamma_{2p+1}\gamma_{2q} ).
\end{split}
\end{equation}
This expression can be simplified by separating the summation into diagonal and off-diagonal terms, a technique which we employ heavily throughout this derivation. The sum over the $ \gamma_{2p}\gamma_{2q} $ and $ \gamma_{2p+1}\gamma_{2q+1} $ terms simply yields a multiple of the identity:
\begin{align}
\sum_{p,q} &h_{pq} ( \gamma_{2p}\gamma_{2q} + \gamma_{2p+1}\gamma_{2q+1} ) \notag \\
&= \sum_{p} h_{pp} \left( \gamma_{2p}^2 + \gamma_{2p+1}^2 \right) \notag \\
&\quad + \sum_{\substack{p,q\\p<q}} h_{pq} \left( \{ \gamma_{2p},\gamma_{2q} \} + \{ \gamma_{2p+1},\gamma_{2q+1} \} \right) \notag \\
&= 2 \sum_{p}h_{pp} {\openone}.
\end{align}
The remaining terms simplify but do not cancel or reduce in order:~by relabeling the indices (another trick which we make frequent use of), we see that $ \sum_{p,q} h_{pq} i \gamma_{2p}\gamma_{2q+1} = \sum_{p,q} h_{pq} i\gamma_{2q}\gamma_{2p+1} $, hence
\begin{equation}\label{eq:1b_terms}
\sum_{p,q} h_{pq} a_p^\dagger a_q = \frac{1}{2} \left(\sum_{p} h_{pp} {\openone} + \sum_{p,q} h_{pq} i\gamma_{2p}\gamma_{2q+1} \right).
\end{equation}

Next, we consider the two-body interaction terms, which feature the quartic order operators. Any such term is written as a linear combination of 16 Majorana operators. To do so, define
\begin{equation}
\Gamma_{pqrs}^{\mathbf{x}} = i^{|\mathbf{x}|} (-1)^{x_1+x_2} \gamma_{2p+x_1} \gamma_{2q+x_2} \gamma_{2r+x_3} \gamma_{2s+x_4},
\end{equation}
where $ \mathbf{x} = x_1x_2x_3x_4 \in \{ 0,1 \}^4 $ is a binary string encoding the parity of each index and $ |\mathbf{x}| $ is its Hamming weight. Then, from Eq.~\eqref{eq:fermion_from_majorana}, a straightforward algebraic expansion gives the following expression for each two-body term:
\begin{equation}
a_p^\dagger a_q^\dagger a_r a_s = \frac{1}{16} \sum_{\mathbf{x} \in \{ 0,1 \}^4} \Gamma_{pqrs}^{\mathbf{x}}. 
\end{equation}

Consider the set $ B_1 = \{ 0011,1100,0101,1010 \} $. These strings correspond to the quartic Majorana operators appearing in Eq.~\eqref{eq:H_e_majorana}, and as we will see, they are the only such terms which do not vanish. Also, note that since $ a_j^2 = (a_j^\dagger)^2 = 0 $, we impose the trivial constraints in the summations that $ p \neq q $ and $ r \neq s $. Specifying these conditions explicitly will be useful once we relabel the indices. We rewrite these terms as
\begin{equation}
\begin{split}
\sum_{p,q,r,s} h_{pqrs} \Gamma_{pqrs}^{1100} &= - \sum_{\substack{p,q,r,s\\p\neq q;r\neq s}} h_{pqrs} \gamma_{2p+1}\gamma_{2q+1}\gamma_{2r}\gamma_{2s} \\
&= - \sum_{\substack{p,q,r,s\\p\neq q;r\neq s}} h_{pqrs} \gamma_{2r}\gamma_{2s}\gamma_{2p+1}\gamma_{2q+1} \\
&= - \sum_{\substack{p,q,r,s\\p\neq q;r\neq s}} h_{pqrs} \gamma_{2p}\gamma_{2q}\gamma_{2r+1}\gamma_{2s+1},
\end{split}
\end{equation}
and, for $ x,y \in \{0,1\} $ such that $ x \neq y $,
\begin{equation}
\begin{split}
\sum_{p,q,r,s} h_{pqrs} \Gamma_{pqrs}^{xyxy} &= \sum_{\substack{p,q,r,s\\p\neq q;r\neq s}} h_{pqrs} \gamma_{2p+x}\gamma_{2q+y}\gamma_{2r+x}\gamma_{2s+y} \\
&= - \sum_{\substack{p,q,r,s\\p\neq q;r\neq s}} h_{pqrs} \gamma_{2p+x}\gamma_{2r+x}\gamma_{2q+y}\gamma_{2s+y} \\
&= - \sum_{\substack{p,q,r,s\\p\neq r;q\neq s}} h_{pqrs} \gamma_{2p+x}\gamma_{2q+x}\gamma_{2r+y}\gamma_{2s+y}.
\end{split}
\end{equation}
Thus we obtain
\begin{align}
&\sum_{p,q,r,s} h_{pqrs} \left( \sum_{\mathbf{x} \in B_1} \Gamma_{pqrs}^{\mathbf{x}} \right) \label{eq:B1} \\
&\quad = -2 \left( \sum_{\substack{p,q,r,s\\p\neq q;r\neq s}} + \sum_{\substack{p,q,r,s\\p\neq r;q\neq s}} \right) h_{pqrs} \gamma_{2p}\gamma_{2q}\gamma_{2r+1}\gamma_{2s+1}. \notag
\end{align}
Since we would like to completely separate the quadratic terms from the quartic terms, we observe that if $ p=q $ or $ r=s $ in the above expression, then those terms reduce to quadratic order (or the identity, if both equalities hold). The first summation automatically excludes such reduction, so we analyze the second one, again separating the diagonal and off-diagonal summands with respect to each pair $ (p,q) $ and $ (r,s) $:
\begin{align}
&\sum_{\substack{p,q,r,s\\p\neq r;q\neq s}} h_{pqrs} \gamma_{2p}\gamma_{2q}\gamma_{2r+1}\gamma_{2s+1} \notag \\
&= \sum_{\substack{p,q,r,s\\p\neq r;q\neq s\\p\neq q;r\neq s}} h_{pqrs} \gamma_{2p}\gamma_{2q}\gamma_{2r+1}\gamma_{2s+1} + \sum_{\substack{p,r\\p\neq r}} h_{pprr} {\openone} \notag \\
&\quad + \sum_{\substack{p,q,r\\p\neq r;q\neq r\\p\neq q}} h_{pqrr} \gamma_{2p}\gamma_{2q} + \!\! \sum_{\substack{p,r,s\\p\neq r;p\neq s\\r\neq s}} h_{pprs} \gamma_{2r+1}\gamma_{2s+1} \notag \\
&= \sum_{\substack{p,q,r,s\\p\neq r;q\neq s\\p\neq q;r\neq s}} h_{pqrs} \gamma_{2p}\gamma_{2q}\gamma_{2r+1}\gamma_{2s+1} + \sum_{\substack{p,r\\p\neq r}} h_{pprr} {\openone} \notag \\
&\quad + \sum_{\substack{p,q,r\\p\neq r;q\neq r\\p<q}} h_{pqrr} \{ \gamma_{2p},\gamma_{2q} \} + \!\!  \sum_{\substack{p,r,s\\p\neq r;p\neq s\\r<s}} h_{pprs} \{ \gamma_{2r+1},\gamma_{2s+1} \} \notag \\
&= \sum_{\substack{p,q,r,s\\p\neq r;q\neq s\\p\neq q;r\neq s}} h_{pqrs} \gamma_{2p}\gamma_{2q}\gamma_{2r+1}\gamma_{2s+1} + \sum_{\substack{p,q\\p\neq q}} h_{ppqq} {\openone}. \label{eq:B1_simp}
\end{align}
So we see that these quadratic terms in fact vanish due to anticommutation.

Now we show that the remaining 12 cases yield the same operators as those already obtained in Eq.~\eqref{eq:1b_terms}. Let $ B_2 = \{ 0000,0110,1001,1111 \} $ and $ x,y \in \{ 0,1 \} $:
\begin{equation}
\begin{split}
\sum_{p,q,r,s} h_{pqrs} \Gamma_{pqrs}^{xyyx} &= \sum_{\substack{p,q,r,s\\p\neq q;r\neq s}} h_{pqrs} \gamma_{2p+x}\gamma_{2q+y}\gamma_{2r+y}\gamma_{2s+x} \\
&= \sum_{\substack{p,q,r,s\\p\neq q;r\neq s\\p\neq s}} h_{pqrs} \gamma_{2p+x}\gamma_{2q+y}\gamma_{2r+y}\gamma_{2s+x} \\
&\quad + \sum_{\substack{p,q,r\\p\neq q;r\neq p}} h_{pqrp} \gamma_{2q+y}\gamma_{2r+y}.
\end{split}
\end{equation}
The second sum simplifies to
\begin{align}
\sum_{\substack{p,q,r\\p\neq q;r\neq p}} h_{pqrp} \gamma_{2q+y}\gamma_{2r+y} &= \sum_{\substack{p,q\\p\neq q;r\neq p\\q<r}} h_{pqrp} \{ \gamma_{2q+y},\gamma_{2r+y} \} \notag \\
&\quad + \sum_{\substack{p,q\\p\neq q}} h_{pqqp} {\openone} \notag \\
&= \sum_{\substack{p,q\\p\neq q}} h_{pqqp} {\openone}.
\end{align}
The first sum depends on whether $ x $ and $ y $ are the same or not. If $ x \neq y $, then
\begin{align}
\sum_{\substack{p,q,r,s\\p\neq q;r\neq s\\p\neq s}} &h_{pqrs} \gamma_{2p+x}\gamma_{2q+y}\gamma_{2r+y}\gamma_{2s+x} \notag \\
&= \sum_{\substack{p,q,r,s\\p\neq q;r\neq s\\p<s}} h_{pqrs} ( \gamma_{2p+x}\gamma_{2q+y}\gamma_{2r+y}\gamma_{2s+x} \notag \\
&\qquad\qquad\ \ + \gamma_{2s+x}\gamma_{2q+y}\gamma_{2r+y}\gamma_{2p+x} ) \notag\\
&= \sum_{\substack{p,q,r,s\\p\neq q;r\neq s\\p<s}} h_{pqrs} ( \gamma_{2p+x}\gamma_{2q+y}\gamma_{2r+y}\gamma_{2s+x} \notag \\
&\qquad\qquad\ \ - \gamma_{2p+x}\gamma_{2q+y}\gamma_{2r+y}\gamma_{2s+x} ) \notag \\
&= 0. \label{eq:quartic_cancellation}
\end{align}
If $ x = y $, we first observe that if $ p \neq r $ and $ q \neq s $, then the sum vanishes, as demonstrated above. Therefore we have the three remaining cases ($ p \neq r $ and $ q = s $, $ p = r $ and $ q \neq s $, and $ p = r $ and $ q = s $):
\begin{align}
\sum_{\substack{p,q,r,s\\p\neq q;r\neq s\\p\neq s}} &h_{pqrs} \gamma_{2p+x}\gamma_{2q+x}\gamma_{2r+x}\gamma_{2s+x} \notag \\
&= - \sum_{\substack{p,q\\p\neq q}} h_{pqpq} {\openone} - \sum_{\substack{p,q,r\\p\neq q;r\neq q\\p\neq r}} h_{pqrq} \gamma_{2p+x}\gamma_{2r+x} \notag \\
&\quad - \sum_{\substack{p,q,s\\p\neq q;p\neq s\\q\neq s}} h_{pqps} \gamma_{2q+x}\gamma_{2s+x} \notag \\
&= - \sum_{\substack{p,q\\p\neq q}} h_{pqpq} {\openone} - \sum_{\substack{p,q,r\\p\neq q;r\neq q\\p<r}} h_{pqrq} \{ \gamma_{2p+x},\gamma_{2r+x} \} \notag \\
&\quad - \sum_{\substack{p,q,s\\p\neq q;p\neq s\\q<s}} h_{pqps} \{ \gamma_{2q+x},\gamma_{2s+x} \} \notag \\
&= - \sum_{\substack{p,q\\p\neq q}} h_{pqpq} {\openone}.
\end{align}
Altogether, the terms corresponding to $ B_2 $ are just the identity operator:
\begin{equation}\label{eq:B2}
\sum_{p,q,r,s} h_{pqrs} \left( \sum_{\mathbf{x} \in B_2} \Gamma_{pqrs}^{\mathbf{x}} \right) = \sum_{\substack{p,q\\p\neq q}} \left( 4h_{pqqp} - 2h_{pqpq} \right) {\openone}.
\end{equation}
Let $ B_3 = \{ 0010, 0100, 1011, 1101 \} $. These strings give rise to the same terms, since for $ x \in \{0,1\} $,
\begin{equation}
\begin{split}
\sum_{p,q,r,s} \Gamma_{pqrs}^{x01x} &= \sum_{p,q,r,s} h_{pqrs} i\gamma_{2p+x}\gamma_{2q}\gamma_{2r+1}\gamma_{2s+x} \\
&= - \sum_{p,q,r,s} h_{pqrs} i\gamma_{2p+x}\gamma_{2q+1}\gamma_{2r}\gamma_{2s+x} \\
&= \sum_{p,q,r,s} \Gamma_{pqrs}^{x10x}. \label{eq:B3_equiv}
\end{split}
\end{equation}
We simplify the sum using the same type of manipulations as in Eq.~\eqref{eq:quartic_cancellation}:
\begin{align}
&\sum_{p,q,r,s} h_{pqrs} i\gamma_{2p}\gamma_{2q}\gamma_{2r+1}\gamma_{2s} \notag \\
&= \sum_{\substack{p,q,r,s\\p\neq q;r\neq s\\p<s;s\neq q}} h_{pqrs} i( \gamma_{2p}\gamma_{2q}\gamma_{2r+1}\gamma_{2s} - \gamma_{2p}\gamma_{2q}\gamma_{2r+1}\gamma_{2s} ) \notag \\
&\quad - \sum_{\substack{p,q,r\\p\neq q;r\neq q}} h_{pqrq} i\gamma_{2p}\gamma_{2r+1} + \sum_{\substack{p,q,r\\p\neq q;r\neq p}} h_{pqrp} i\gamma_{2q}\gamma_{2r+1} \notag \\
&= \sum_{\substack{p,q,r\\p\neq r;q\neq r}} ( h_{prrq} - h_{prqr} ) i\gamma_{2p}\gamma_{2q+1}.
\end{align}
Thus we obtain
\begin{equation}\label{eq:B3}
\begin{split}
\sum_{p,q,r,s} &h_{pqrs} \left( \sum_{\mathbf{x} \in B_3} \Gamma_{pqrs}^{\mathbf{x}} \right) \\
&= 4 \sum_{\substack{p,q,r\\p\neq r;q\neq r}} ( h_{prrq} - h_{pqrr} ) i\gamma_{2p}\gamma_{2q+1}.
\end{split}
\end{equation}
The last set is $ B_4 = \{ 0001,0111,1000,1110 \} $. Again, all four strings correspond to the same terms. We show this by evaluating, for $ w,x,y \in \{0,1\} $ with $ w \neq y $,
\begin{align}
\sum_{p,q,r,s} &\Gamma_{pqrs}^{wxxy} = (-1)^{w+1} \!\! \sum_{p,q,r,s} h_{pqrs} i\gamma_{2p+w}\gamma_{2q+x}\gamma_{2r+x}\gamma_{2s+y} \notag \\
&= (-1)^{w+1} \!\! \sum_{\substack{p,q,r,s\\p\neq q;r\neq s\\q<r}} h_{pqrs} i( \gamma_{2p+w}\gamma_{2q+x}\gamma_{2r+x}\gamma_{2s+y} \notag \\
&\qquad\qquad\qquad\qquad\ \ - \gamma_{2p+w}\gamma_{2q+x}\gamma_{2r+x}\gamma_{2s+y} ) \notag \\
&\quad + (-1)^{w+1} \!\! \sum_{\substack{p,q,s\\p\neq q;q\neq s}} h_{pqqs} i\gamma_{2p+w}\gamma_{2s+y} \notag \\
&= (-1)^{w+1} \!\! \sum_{\substack{p,q,r\\p\neq r;r\neq q}} h_{prrq} i\gamma_{2p+w}\gamma_{2q+y}.
\end{align}
If we order the Majorana product such that the even index appears first, then the sign of $ (-1)^{w+1} $ cancels with that of swapping $ \gamma_{2p+w} $ with $ \gamma_{2q+y} $, and so we have
\begin{equation}\label{eq:B4}
\sum_{p,q,r,s} h_{pqrs} \left( \sum_{\mathbf{x} \in B_4} \Gamma_{pqrs}^{\mathbf{x}} \right) = 4 \sum_{\substack{p,q,r\\p\neq r;q\neq r}} h_{prrq} i\gamma_{2p}\gamma_{2q+1}.
\end{equation}
Finally, we collect all the terms from \cref{eq:B1,eq:B2,eq:B3,eq:B4}, along with the slight simplification in Eq.~\eqref{eq:B1_simp}, to write the two-body terms as
\begin{equation}
\begin{split}
\frac{1}{2} &\sum_{p,q,r,s} h_{pqrs} a_p^\dagger a_q^\dagger a_r a_s = \frac{1}{32} \sum_{p,q,r,s} h_{pqrs} \left( \sum_{\mathbf{x} \in \{0,1\}^4} \Gamma_{pqrs}^{\mathbf{x}} \right) \\
&= \frac{1}{8} \sum_{\substack{p,q\\p\neq q}} \left( h_{pqqp} - h_{pqpq} \right) {\openone} \\
&\quad + \frac{1}{8} \sum_{\substack{p,q,r\\p\neq r;q\neq r}} ( 2h_{prrq} - h_{pqrr} ) i\gamma_{2p}\gamma_{2q+1} \\
&\quad - \frac{1}{16} \left( \sum_{\substack{p,q,r,s\\p\neq q;r\neq s}} + \sum_{\substack{p,q,r,s\\p\neq q;r\neq s\\p\neq r;q\neq s}} \right) h_{pqrs} \gamma_{2p}\gamma_{2q}\gamma_{2r+1}\gamma_{2s+1}.
\end{split}
\end{equation}
Including the one-body terms, Eq.~\eqref{eq:1b_terms}, we express the full electronic structure Hamiltonian in terms of Majorana operators:
\begin{equation}
\begin{split}
H &= \frac{1}{2} \left[\sum_{p} h_{pp} + \frac{1}{4} \sum_{\substack{p,q\\p\neq q}} \left( h_{pqqp} - h_{pqpq} \right) \right] {\openone}  \\
&\quad + \sum_{p,q} \left[ \frac{1}{2} h_{pq} + \!\!\!\! \sum_{\substack{r\\p\neq r;q\neq r}} \!\! \left( \frac{1}{4}h_{prrq} - \frac{1}{8}h_{pqrr} \right) \right] i\gamma_{2p}\gamma_{2q+1} \\
&\quad - \frac{1}{16} \left( \sum_{\substack{p,q,r,s\\p\neq q;r\neq s}} + \sum_{\substack{p,q,r,s\\p\neq q;r\neq s\\p\neq r;q\neq s}} \right) h_{pqrs} \gamma_{2p}\gamma_{2q}\gamma_{2r+1}\gamma_{2s+1}.
\end{split}
\end{equation}
Defining new coefficients as
\begin{equation}\label{eq:new_coeff}
\begin{split}
\tilde{h} &= \frac{1}{2} \sum_{p} h_{pp} + \frac{1}{8} \sum_{\substack{p,q\\p\neq q}} \left( h_{pqqp} - h_{pqpq} \right) , \\
\tilde{h}_{pq} &= \frac{1}{2} h_{pq} + \sum_{\substack{r\\p\neq r;q\neq r}} \left( \frac{1}{4}h_{prrq} - \frac{1}{8}h_{pqrr} \right), \\
\tilde{h}_{pqrs} &= - \frac{1}{8} [ 1 + (1 - \delta_{pr})(1 - \delta_{qs}) ] h_{pqrs}.
\end{split}
\end{equation}
we obtain the Hamiltonian as presented in the main text, Eq.~\eqref{eq:H_e_majorana}.

\section{Proof details for Theorem~\ref{thm:cubicPartitionTheorem}}\label{sec:majorana_proof_details}

To see why \cref{eq:disjointness,eq:covering} hold, we first examine the structure of our anticommuting partition $ \{ S_{(q,r,s)} \} $. Although we have the choice of matching either one or three indices in each term's support, here we only use the condition of three matches. This amounts to matching exactly one even index, since the other two must be odd (or vice versa, by symmetry). In this sense, the problem reduces to finding an anticommuting partition of the set of all quadratic Majorana operators with only even indices in their support. Taking products with the set of all quadratic operators with only odd indices in their support then generates all the relevant quartic operators, $ \mathcal{M} $ (up to phase factors).

One may readily check from the definition of $ S_{(q,r,s)} $ that they do indeed cover $ \mathcal{M} $ and are all pairwise disjoint. However, since we have reduced the problem to considering simply quadratic operators, we may provide a visual argument which clearly demonstrates the partitioning scheme, Figure~\ref{fig:quadPartitions}. Note that for $ N = 2 $, there is only one unique quartic term, and for $ N = 3 $, all the even quadratics already anticommute (i.e., the red bin in the figure). From the figure, we immediately see the disjointness property satisfied, with each set of common index $ 2q $ having size $ q $. The exception, again, is the red bin, which corresponds to the union $ S_{(1,r,s)} \cup S_{(2,r,s)} $ as mentioned in the main text. Hence there are $ N-2 $ anticommuting sets of even-index quadratic operators, and taking products with all $ \binom{N}{2} $ odd-index quadratic operators yields the desired $ O(N^3) $ result.

\begin{figure}
    \centering
    \includegraphics{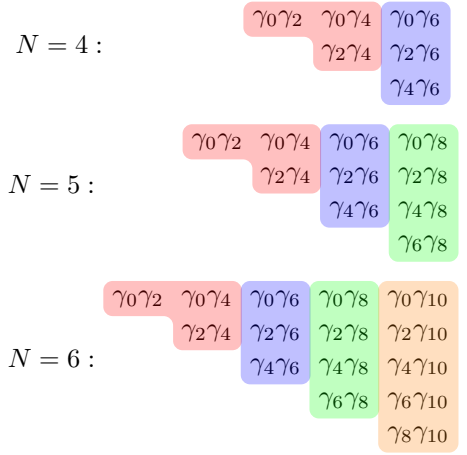}
    \caption{Partitioning of electronic structure terms. Finding an anticommuting partition of the quartic terms can be reduced to finding an anticommuting partition of quadratic terms with exclusively even (equiv.~odd) indices. Each highlighted bin is such an anticommuting set. Excluding the red bin, each set shares one common index $ 2q $ for $ 3 \leq q \leq N-1 $. Although only 3 values of $ N $ are depicted, the induction of this diagram is straightforward for arbitrary $ N $. One thus obtains $ N - 3 $ bins of size $ q $ each and $ 1 $ ``red bin'' of size $ 3 $.}
    \label{fig:quadPartitions}
\end{figure}

\section{Electronic structure systems}
\label{appendix:systems}

Table~\ref{tab:appendixSystems} details the systems from which the electronic structure Hamiltonians studied in Sec.~\ref{subsec:pauliLevelColouring} were generated.

\begin{table}[h]
\centering
\caption{The systems examined in our numerical analysis.  Geometries were obtained from the NIST CCBDB database~\cite{johnsoniiiNISTComputationalChemistry2016}, and molecular orbital integrals in the Hartee--Fock basis obtained from Psi4~\cite{parrishPsi4OpenSourceElectronic2017} and OpenFermion~\cite{mccleanOpenFermionElectronicStructure2017}.}\label{tab:appendixSystems} 
\begin{tabular}[t]{lrrlr}
\toprule
System & Charge & Multiplicity & Basis & Qubits \\
\midrule 
\ce{Ar1} & 0 & 1 & STO-3G & 18 \\
\ce{B1} & 0 & 2 & STO-3G & 10 \\
\ce{Be1} & 0 & 1 & STO-3G & 10 \\
\ce{Br1} & 0 & 2 & STO-3G & 36 \\
\ce{C1O1} & 0 & 1 & STO-3G & 20 \\
\ce{C1O2} & 0 & 1 & STO-3G & 30 \\
\ce{C1} & 0 & 3 & STO-3G & 10 \\
\ce{Cl1} & 0 & 2 & STO-3G & 18 \\
\ce{Cl1} & -1 & 1 & STO-3G & 18 \\
\ce{F1} & 0 & 2 & STO-3G & 10 \\
\ce{F2} & 0 & 1 & STO-3G & 20 \\
\ce{H1Cl1} & 0 & 1 & STO-3G & 20 \\
\ce{H1F1} & 0 & 1 & 3-21G & 22 \\
\ce{H1F1} & 0 & 1 & STO-3G & 12 \\
\ce{H1He1} & 0 & 1 & 3-21G & 8 \\
\ce{H1He1} & 0 & 1 & 6-311G** & 24 \\
\ce{H1He1} & 0 & 1 & 6-311G & 12 \\
\ce{H1He1} & 0 & 1 & 6-31G** & 20 \\
\ce{H1He1} & 0 & 1 & 6-31G & 8 \\
\ce{H1He1} & 0 & 1 & STO-3G & 4 \\
\ce{H1Li1O1} & 0 & 1 & STO-3G & 22 \\
\ce{H1Li1} & 0 & 1 & 3-21G & 22 \\
\ce{H1Li1} & 0 & 1 & STO-3G & 12 \\
\ce{H1Na1} & 0 & 1 & STO-3G & 20 \\
\ce{H1O1} & -1 & 1 & STO-3G & 12 \\
\ce{H1} & 0 & 2 & STO-3G & 2 \\
\ce{H2Be1} & 0 & 1 & STO-3G & 14 \\
\ce{H2C1O1} & 0 & 1 & STO-3G & 24 \\
\ce{H2C1} & 0 & 3 & 3-21G & 26 \\
\ce{H2C1} & 0 & 3 & STO-3G & 14 \\
\ce{H2C1} & 0 & 3 & STO-3G & 14 \\
\ce{H2C2} & 0 & 1 & STO-3G & 24 \\
\ce{H2Mg1} & 0 & 1 & STO-3G & 22 \\
\ce{H2O1} & 0 & 1 & STO-3G & 14 \\
\ce{H2O2} & 0 & 1 & STO-3G & 24 \\
\ce{H2S1} & 0 & 1 & STO-3G & 22 \\
\ce{H2} & 0 & 1 & 3-21G & 8 \\
\ce{H2} & 0 & 1 & 6-311G** & 24 \\
\ce{H2} & 0 & 1 & 6-311G & 12 \\
\ce{H2} & 0 & 1 & 6-31G** & 20 \\
\ce{H2} & 0 & 1 & 6-31G & 8 \\
\ce{H2} & 0 & 1 & STO-3G & 4 \\
\ce{H3N1} & 0 & 1 & STO-3G & 16 \\
\ce{H3} & 0 & 1 & 3-21G & 12 \\
\ce{H3} & 1 & 1 & STO-3G & 6 \\
\midrule
\end{tabular}
\end{table}
\null\vfill
\begin{table}[h!]
\begin{tabular}[t]{lrrlr}
 \phantom{System} & \phantom{Charge} & \phantom{Multiplicity} & \phantom{Basis} & \phantom{Qubits} \\
\midrule 
\ce{H4C1} & 0 & 1 & STO-3G & 18 \\
\ce{H4C2} & 0 & 1 & STO-3G & 28 \\
\ce{H4N1} & 1 & 1 & STO-3G & 18 \\
\ce{He1} & 0 & 1 & STO-3G & 2 \\
\ce{K1} & 0 & 2 & STO-3G & 26 \\
\ce{Li1} & 0 & 2 & STO-3G & 10 \\
\ce{Mg1} & 0 & 1 & STO-3G & 18 \\
\ce{N1} & 0 & 4 & STO-3G & 10 \\
\ce{N2} & 0 & 1 & STO-3G & 20 \\
\ce{Na1} & 0 & 2 & STO-3G & 18 \\
\ce{Ne1} & 0 & 1 & STO-3G & 10 \\
\ce{O1} & 0 & 3 & STO-3G & 10 \\
\ce{O2} & 0 & 1 & STO-3G & 20 \\
\ce{O2} & 0 & 3 & STO-3G & 20 \\
\ce{P1} & 0 & 4 & STO-3G & 18 \\
\ce{S1} & 0 & 3 & STO-3G & 18 \\
\ce{Si1} & 0 & 3 & STO-3G & 18 \\
\bottomrule
\end{tabular}
\null\hfill
\end{table}
\clearpage
\end{document}